\begin{document}

\title{Reciprocating Preferences Stablize Matching: \\ College Admissions Revisited
}


\author{Jerry Jian Liu         \and
        Dah Ming Chiu 
}


\institute{J. Liu \at
              Department of Information Engineering, The Chinese University of
Hong Kong, Shatin, NT, Hong Kong. \\
              Tel.: +852-3163-4296\\
              Fax: +852-2603-5032\\
              \email{liujtm@gmail.com}           
           \and
           D. Chiu \at
              \email{dmchiu@ie.cuhk.edu.hk}
}


\maketitle

\begin{abstract}
In considering the college admissions problem, almost fifty years ago, Gale and Shapley came up with
a simple abstraction based on preferences of students and colleges. They introduced the concept of
stability and optimality; and proposed the \emph{deferred acceptance} (DA) algorithm that is proven
to lead to a stable and optimal solution. This algorithm is simple and computationally efficient.
Furthermore, in subsequent studies it is shown that the DA algorithm is also strategy-proof, which
means, when the algorithm is played out as a mechanism for matching two sides (e.g. colleges and
students), the parties (colleges or students) have no incentives to act other than according to
their true preferences. Yet, in practical  college admission systems, the DA algorithm is often not
adopted. Instead, an algorithm known as the \emph{Boston Mechanism} (BM) or its variants are widely
adopted. In BM, colleges accept students without deferral (considering other colleges' decisions),
which is exactly the opposite of Gale-Shapley's DA algorithm. To explain and rationalize this
reality, we introduce the notion of \emph{reciprocating preference} to capture the influence of a
student's interest on a college's decision. This model is inspired by the actual mechanism used to
match students to universities in Hong Kong. The notion of reciprocating preference defines a class
of matching algorithms, allowing different degrees of reciprocating preferences by the students and
colleges. DA and BM are but two extreme cases (with zero and a hundred percent reciprocation) of this
set. This model extends the notion of stability and optimality as well. As in Gale-Shapley's original
paper, we discuss how the analogy can be carried over to the stable marriage problem, thus
demonstrating the model's general applicability.
\keywords{Two-sided market \and Generalized matching mechanism \and Reciprocating preference  \and Strategy-proofness \and Social welfare}
\noindent\textbf{JEL Classification}$\quad$C78 $\cdot$ I23 $\cdot$ I31
\end{abstract}

\section{Introduction}
One of the main features of many market and social processes is their bilateral structure and the
need to match agents from one side of the market with the other side, e.g. students and schools in
college admissions, employees and companies in the job market \citep[see][]{JobMarket}, men and women in
online dating sites, advertisers and advertising slots in sponsored search \citep[see][]{SponSearch}. A class
of ``two-sided matching model'' for studying such problems was first introduced by Gale and Shapley
in 1962 in their seminal paper \citep{GS}, in the context of college admissions and the marriage
problem. Yet, the original model is quite general and can be easily adapted for other two-sided
markets, such as the well-known National Resident Matching Program (NRMP)\footnote{Readers can refer to the official website \url{http://www.nrmp.org/} for details.} for assigning
medical students to residency positions in US.

In Gale and Shapley's seminal paper, the college admissions problem is formulated as follows. Let
there be a set of colleges $\mathcal{C}=\{c_1,\ldots,c_m\}$ and a set of students
$\mathcal{S}=\{s_1,\ldots,s_n\}$. Each college $c_i$ ($i\in\{1,\ldots,m\}$) has a quota $q_i$, that
denotes the maximum number of students it can admit. Each student $s_j$ ($j\in\{1,\ldots,n\}$) can
apply to any number of colleges, represented by a \emph{strict} preference (ordering) over the $m$
colleges.\footnote{ By {\em strict} preference we mean that a student is NOT indifferent between any
two colleges. This simplifies our discussion. In practice, ties can be broken by random lotteries. }
Each college also has a strict ranking for all students (e.g. according to their test scores,
interview performances or other criteria). A \emph{matching} is simply an assignment of students to
colleges such that each college accepts no more students than its quota, and each student is admitted
to at most one college. For convenience, we use $c_0$ to denote a dummy college that takes all
unmatched students. The college admissions problem can then be stated as the problem of designing an
algorithm to arrive at a matching satisfying certain properties. The \emph{mechanism} that implements
the solution can be considered as a blackbox that takes the students' and colleges' preferences
(denoted by $p$) as input, and outputs the matching, as depicted in
Fig. \ref{mechanism_prev}.\footnote{There are more complicated mechanisms that allow students and
colleges to interact with each other multiple times before determining the matching, e.g. in
\cite{School_choice}. They are not considered in this paper.}

In \cite{GS}, two desirable properties are defined: \emph{stability} and \emph{optimality}. Stability
of a matching means that it is not possible to find a student and a college that are not matched to
each other, but both prefer each other more than their current match. In college admissions,
optimality means student-optimality. A stable matching is \emph{student-optimal} if every student is
at least as well-off as he/she is in any other stable matching. College-optimality can be defined in
a similar way, but it is never used. Gale and Shapley proved that there always exists stable
matchings. Their proof is by construction (using the deferred acceptance (DA) algorithm in
Fig. \ref{mechanism_prev}). In the special case when the colleges' preferences are all based on
student examination results, the DA algorithm is equivalent to a procedure that sorts all students
according to merit and letting students choose the most preferred college (if quota still allows) in
order, starting from the best student. It is also shown in \cite{GS} that the optimal stable matching
is unique.

\begin{figure}[hbt]
  \centering
  \includegraphics[scale=0.7]{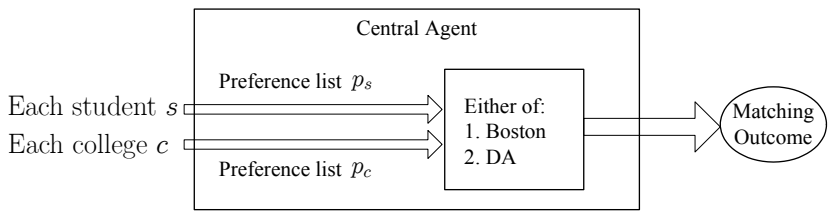}\\
  \caption{Classic Matching Mechanism}\label{mechanism_prev}
\end{figure}

In practice, it is reportedly quite common that the Boston Mechanism (BM) \citep[see][]{Boston} is used
instead. The basic procedures of BM are as follows: in the first round each college $c$ only
considers students who listed it as their first choice and accepts these students one by one
according to $c$'s own preference until the quota of the college is filled up.\footnote{Under BM, all
the offers made so far are \emph{committed} and cannot be changed in subsequent rounds. In another
word, colleges accept students \emph{without deferral}. In contrast, under DA, colleges accept
students \emph{tentatively} in each round and would reconsider all applicants in later rounds.
Matching results under DA cannot be determined until all the iterations are over.} In the second
round, only the remaining student without any offers and colleges still having unfilled quota are
considered. Each remaining college then considers students who listed it as the second choice and
assigns offers to them one by one until the quota is exhausted. The process goes on round by round
until all quota are filled up or all students are already accommodated. One major criticism
concerning BM is its lack of stability. Judged by classic matching theory, matching results under BM
are far from stable and thus many students are ``incentivized'' to make complicated strategies when
submitting their preference lists. In contrast, results under DA are both stable and optimal for
students. Students can feel free to reveal their true preference lists. The transition from BM to DA
is therefore suggested in literature such as \citep{School_choice,Chen,Games}, which arguably would
lead to significant efficiency gains for the whole community. However, our investigation of some practical college admissions systems shows that \emph{certain hybrid mechanism may be more acceptable in society from the perspective of colleges' enrollment concerns as well as students' personal interest}.

In this paper, we propose a generalized model for college admissions, which considers the tradeoff
between students' eligibility and interest by adjusting an additional parameter called
\emph{reciprocating factor}. The larger the reciprocating factor is, the more would the interest
factor counts when inspecting the applicants. BM and DA are merely two special cases of the
generalized model when setting different reciprocating factors. Our model also extends the classic
notion of stability and optimality by re-examining the formation of agents' preferences.

The remainder of the paper is structured as follows. In Section \ref{sec:CA_HK} we elaborate the
actual college admissions system used in Hong Kong which inspires our formal model. In Section
\ref{sec:gene_model} we proposes the generalized model for stable matching. Some of the important
properties of generalized model are presented and discussed extensively in Section
\ref{sec:property}. Section \ref{sec:simu} compares the efficiency of generalized mechanism with the
classic Gale-Shapley mechanism through simulation. We give some further discussion and related work
in Section \ref{sec:discussion} and conclude in Section \ref{sec:conclusion}.

\section{College Admissions in Hong Kong}\label{sec:CA_HK}
The Joint University Programmes Admissions System (JUPAS) is the central system for students to apply
to the nine participating tertiary institutions in Hong Kong \citep{JUPAS}. In JUPAS each student can
apply to at most 25 programmes in order of preference. These preferences are sub-divided and made
known to the institutions in the form of five bands as shown in Table 1.
\begin{table}[htbp]
  \centering
 \caption{Correspondence between Applicants' Band Order and Actual Choice Order}
  \begin{tabular}{c c}
      BAND & \; PROGRAMME CHOICE NO.\\ \hline
      A & 1 - 3\\
      B & 4 - 6\\
      C & 7 - 10\\
      D & 11 - 14\\
      E & 15 - 25\\
  \end{tabular}
\end{table}

The band order is made known to the programmes, however, the inner-band preferences are unrevealed.
For example, in band A, the programme has no idea whether a student lists it as his/her first, second
or third choice.

After aggregating the preference lists from all the applicants, each programme will make a
``\emph{merit order list}''  for its applicants in accordance with its criterion for selection. The
rating criterion is determined independently by each programme: although many programmes would adopt
a Boston-like criterion which assigns band A students with highest priority, some programmes may also
rate students only by their eligibility.\footnote{Student eligibility is judged based on their
academic performances, interview performances and extracurricular activities jointly. Examination
score, which reflects their academic performances, is a dominant factor in determining the
eligibility. \citep{JUPAS}} Some unpopular programmes tend to employ the latter strategy if they find
most excellent students have listed it as band B or band C choices rather than band A.

Finally, after all the \emph{merit order list}s and the applicants preference lists are sent to the
JUPAS office, a central computer will automatically match the applicants with appropriate programmes.
The matching process applies the classic \emph{student-proposing DA algorithm} to give the students
the best offer he/she can possibly obtain.

Although the DA algorithm is used in the last phase, the JUPAS mechanism as a whole is not equivalent
to the classic \emph{Gale-Shapley student optimal mechanism} in \cite{GS}, where students would
truthfully reveal their preference lists.\footnote{Henceforth we still use the term ``DA'' to denote the pure DA mechanism, i.e., the classic Gale-Shapley student optimal mechanism. Otherwise, we will use the term ``DA algorithm'' explicitly when illustrating the JUPAS-like hybrid mechanism.} Applicants in JUPAS face a similar problem like students in BM: their band A choices would receive higher priority than choices in other bands, although there
are no discrimination over the multiple inner-band choices. Actually in JUPAS the applicants are
always advised to choose appropriate programmes according to their interests as well as their
qualifications \citep{JUPAS}. It is never a dominant strategy for students to always reveal their true
preferences. Therefore JUPAS may be interpreted as a hybrid of BM regarding inter-band discrimination
and DA regarding the algorithm applied in the final phase. The statement of ``Gale-Shapley student
optimal stable mechanism is used in Hong Kong'' by \cite{School_choice} is rather mis-informed and misleading: readers may falsely assume that
applicants in Hong Kong could feel free to write down their true preferences, whereas in fact there
is still room for students to manipulate their preference lists.

A natural question is why not replace JUPAS with DA as suggested in most literature. To answer it, we formulate a model in the next section in order to justify such selection of the policy makers for sticking with the hybrid mechanism.

\section{Reciprocating Preference: A Generalized Model of Stable Matching}\label{sec:gene_model}
We have mentioned in the previous section that programmes under JUPAS in Hong Kong have full right to
determine how to rate students. Two factors are the most important: student eligibility and band
order. In practice, most programmes put heavy weight on academic performance in determining students'
eligibility which makes the examination scores a very decisive factor in admission. To simplify the
analysis, we assume each applicant would attend a standard examination and gain a total score which
ranges from zero to the maximum mark. For band order, higher preference (like Band \emph{A})
indicates that the applicant is more interested in the programme while ranking certain programme as
Band \emph{D} or \emph{E} infers the lack of interest.

Let $\mathcal S$ denote the set of applicants, $\mathcal C$ denote the set of programmes.\footnote{In
our model we take ``programmes'' and ``colleges'' equivalently, both as the counterpart of
students/applicants.} Each student $s\in \mathcal S$ achieves a total score $f_s\in [0, f_{max}]$ in
the standard examination where $f_{max}>0$ is the full mark of the examination. Each programme $c\in
\mathcal C$ has a quota $q_c$. When student $s$ applies to programme $c$ as his/her $r$-th choice,
$s$ will obtain a bonus score $h_c(r)$ which would promote his/her position in programme $c$'s {\em
merit order list}. Generally, the bonus score should be a strictly decreasing function over
\emph{preference order/ranking} $r$. That is to say, the smaller $r$ is, the more bonus score it would bring.
In practice, the programme director could make a corresponding table mapping each \emph{preference
order} to a certain bonus score for ease of reference. All applicants are then sorted by their
\emph{merit scores} in each programme, where student $s$'s \emph{merit score} in programme $c$ is
computed according to the following equation:
\begin{equation*}
    {mrt}_c(s)=(1-\alpha_c)\cdot f_s+\alpha_c \cdot h_c(r),\quad \alpha_c\in [0,1]
\end{equation*}
The first term denotes the original score achieved by $s$ and the second term is the bonus score for
students' interest in $c$. In case of tie when students share the same merit score, $f_s$ serves as
the tie-breaker and student with higher $f_s$ obtains higher priority. Finally, if all terms are
equal, we break the tie by a random lottery.

We refer to $\alpha_c$ as the {\em reciprocating factor} (RF), a constant determined independently by
each programme $c$, reflecting its sensitivity towards applicant's {\em preference order}. Programmes
with larger RF place more weight on applicants' personal interests: other things being equal,
students whose interests match with the programme are more favored. In the extreme case when $\alpha$
of different programmes all equal to zero, the matching reduces to exactly the Gale-Shapley student
optimal mechanism: preference order would not affect students' positions in programmes; in the
contrary, when all $\alpha$ are set to one,\footnote{To be more accurate, besides the situation when all $\alpha_c=1$, there are actually an infinitely large amount of pairs $\alpha_c$ and $h_c$ which can implement the Boston mechanism, as long as it holds that for any $c$ and $r$, $h_c(r-1)-h_c(r)>\frac{1-\alpha_c}{\alpha_c}f_{max}$ and $\alpha_c\in(0,1)$. We thank an anonymous referee for pointing this out.} the matching works in the same way as the Boston
mechanism: the first choice gets the highest priority. For a general $\alpha_c\in (0,1)$, say
$\alpha_c$ equals to $0.2$, it means that programme $c$ would count $80\%$ of original score and
$20\%$ of interest factor when evaluating the applicants.

After calculating the merit score for each applicant, each programme could generate a
\emph{reciprocating preference} list by comparing merits scores of the applicants. These
reciprocating preferences are then sent to the central college admissions system for further
processing. The complete procedures can be illustrated in Fig. \ref{mechanism_ext} where students'
\emph{reciprocating preferences} remain the same as their original preferences $p_s$.

\begin{figure}[hbt]
  \centering
  \includegraphics[scale=0.65]{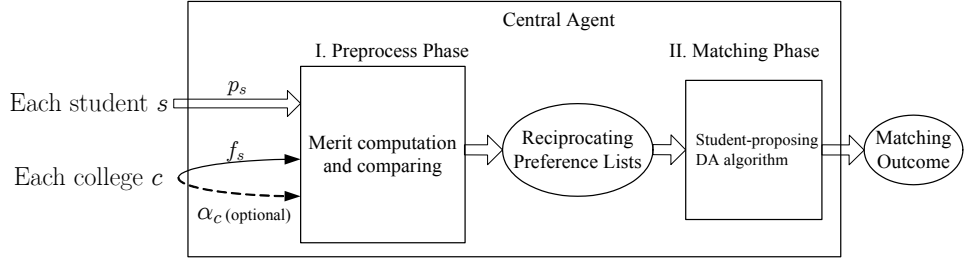}\\
  \caption{Generalized Matching Model}\label{mechanism_ext}
\end{figure}

The reciprocating factor we propose in this paper gives programmes more flexibility in choosing a
``reasonable'' enrollment mechanism:
\begin{itemize}
  \item For programmes which hope to stick to the traditional Boston-like scheme, there will be no need
for any change since by default $\alpha$ is set to one;
  \item For programmes whose sole objective is to raise the average score of newly admitted
  students, setting $\alpha$ to zero would be their favorite strategy;
  \item For other elastic programmes concerning the students' interest as well, a suitable
  $\alpha$ between zero and one needs to be determined according to each programme's own admission policy in each admission year.
\end{itemize}

\section{Properties of Generalized Stable Matching}\label{sec:property}
In last section we proposed the notion of \emph{reciprocating preference} which can better reflect
the selection criteria of individual programmes. It it the student with higher merit score, rather
than higher original exam score, that is more favored by each programme. With the change of the
interpretation of agents' preferences, important concepts such as stability and optimality
should also need to be re-defined \emph{from the perspective of reciprocating preferences}. We now give the formal definition as follows.

\begin{definition}
A matching is \emph{R-stable}\footnote{Similarly, we can use the term ``R-stability'' or ``reciprocating-stability'' in the noun form.} if it is not possible to find a student and a college that are not matched to each other, but both \emph{prefer} each other more than their current match when judged by their \emph{reciprocating preferences}.
\end{definition}

A fundamental property of the generalized mechanism can be then presented as follows.

\begin{property}\label{theo:R_stable}
Matching outcome under the generalized mechanism is \emph{R-stable}. Moreover, it is \emph{optimal} among all possible R-stable outcomes.
\end{property}

The above theorem is easy to see by ignoring the preprocess phase in Fig. \ref{mechanism_ext} and
focusing on the matching phase where DA always generates stable and optimal matching given any
preference lists. Therefore from the perspective of reciprocating preferences, the matching outcome
under JUPAS-like hybrid mechanism is still stable and optimal among all stable outcomes.

Since BM is merely an extreme case of the generalized mechanism, we may easily get the following
corollary through Property \ref{theo:R_stable}.
\begin{corollary}
The Boston mechanism, known to be unstable by classic matching theory
\citep[see][]{School_choice,Chen,Games}, is actually R-stable with regard to colleges' reciprocating preferences.
\end{corollary}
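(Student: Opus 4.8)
The plan is to reduce the claim about the Boston mechanism to an application of Property~\ref{theo:R_stable}. The key observation is that BM arises as the extreme case of the generalized mechanism in which every programme sets its reciprocating factor $\alpha_c$ to one (or, more generally, chooses $\alpha_c$ and $h_c$ so that the bonus term dominates, as noted in the footnote). Since Property~\ref{theo:R_stable} asserts that \emph{any} instance of the generalized mechanism produces an R-stable outcome, and BM is such an instance, R-stability of BM follows immediately once we verify that the generalized mechanism with $\alpha_c=1$ genuinely reproduces the Boston matching.

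First I would make the specialization explicit. With $\alpha_c=1$, the merit score becomes ${mrt}_c(s)=h_c(r)$, depending only on the preference order $r$ at which $s$ ranks $c$, with $f_s$ entering solely as tie-breaker among students sharing the same choice rank. Because $h_c$ is strictly decreasing in $r$, a programme's reciprocating preference then ranks all of its first-choice applicants above all second-choice applicants, and so on, breaking ties by examination score. Running the student-proposing DA algorithm on these reciprocating preferences yields exactly the round-by-round outcome of BM: first-choice applicants are considered first and, once matched, are never displaced, precisely because they sit at the top of every relevant merit list. This step establishes that the BM outcome \emph{is} a generalized-mechanism outcome for the reciprocating preferences induced by $\alpha_c=1$.

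With that identification in hand, the conclusion is a one-line corollary: by Property~\ref{theo:R_stable} the generalized mechanism's output is R-stable, hence the BM matching is R-stable with respect to these reciprocating preferences. I would also emphasize the interpretive point that reconciles this with the classical result: the instability of BM under classic matching theory is measured against the \emph{original} preferences $p_s$ and the colleges' raw score ordering, whereas R-stability is judged against the colleges' reciprocating preferences. The two notions are simply evaluated on different preference profiles, so there is no contradiction.

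The main obstacle is the verification step, not the logical structure. One must check carefully that the DA algorithm applied to the $\alpha_c=1$ reciprocating preferences does not merely resemble BM but coincides with it on every instance; the subtle point is that DA's tentative-acceptance mechanics could in principle behave differently from BM's committed offers. The resolution is that when each programme's merit list places all higher-choice applicants strictly above all lower-choice ones, no later proposer can ever dislodge an earlier-round acceptee, so DA never actually defers anything and collapses to BM. Establishing this lexicographic ``choice-rank dominates score'' structure of the merit ordering is the crux; everything else is immediate from Property~\ref{theo:R_stable}.
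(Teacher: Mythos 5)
Your proposal is correct and follows essentially the same route as the paper, which derives the corollary directly from Property~\ref{theo:R_stable} by observing that BM is the extreme case of the generalized mechanism (all $\alpha_c=1$, or the parameter pairs noted in the footnote). Your explicit verification that DA run on the $\alpha_c=1$ reciprocating preferences collapses to BM --- because the lexicographic ``choice-rank dominates score'' merit ordering ensures no later proposer can displace an earlier-round acceptee --- is a detail the paper leaves implicit, and it is argued correctly.
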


In previous analysis, we assume all students would submit their true preferences. However, this
assumption is unrealistic, especially in practical college admissions system like JUPAS where
strategic behaviors is actually quite common. We now show some key results for strategy analysis as follows.

\begin{property}\label{theo:no_truthful_mech}
College admissions mechanism like JUPAS is NOT strategy-proof in general. The exceptions are when
all the programmes' reciprocating factors are zero, it is the dominant strategy for students
to reveal their true preferences.
\end{property}
\begin{proof}
We defer the detailed proof to Appendix \ref{proof:no_truthful_mech}. \qed
\end{proof}

When all programmes' reciprocating factors are zero, the generalized model reduces to the classic
Gale-Shapley student-optimal mechanism which is strategy-proof for students \citep[see][]{Dubins,Roth82}.
Generally, when most programmes have positive reciprocating factors, students would act
strategically: students with relatively lower examination scores may try to avoid some popular
programmes where there would be lots of competitors with higher scores, and list those unpopular ones
as their top choices instead in order to obtain more bonus scores there and increase the chances for
admission.

It is worth pointing out that even though the generalized model is not strategy-proof, it does not
imply that students would always have strong incentive to strategize. Whether applicants' strategies
can work or not depends on how much information they know about other students' behaviors. For
example, suppose there are five students $s_1,s_2,\ldots,s_5$ and $s_5$ gets the lowest exam score;
there are two programmes $c_1$ and $c_2$, both with reciprocating factors $\alpha=1$ and quota $q=1$.
If the preferences of students are strongly \emph{correlated}, say all students consent that $c_1$ is
better than $c_2$, $s_5$ may try to avoid the popular programme $c_1$ and list $c_2$ as the first
choice to get higher priority. Otherwise, $s_5$ may end up with no offers at all. However, if
students' preferences are totally \emph{uncorrelated}, which means different students hold
independent views to the programmes, $s_5$ may know little about others' preferences.\footnote{In
typical college admissions system like JUPAS, students' submitted preference lists are private
information. Applicant cannot access to other students' preference lists.} In this scenario telling
truth might be the best choice of students. The following property state this observation formally.

\begin{property}\label{theo:Bayes_NE}
Suppose all programmes have equal quota and their reciprocating factors are independently drawn from a uniform distribution over any range $[a,b]$ ($0\leq a<b\leq 1$). Besides, the preferences of
all students are independently drawn from a uniform distribution over the set of all possible rank
orderings (i.e., the uncorrelated environment). Then, truth-telling is a Bayesian Nash equilibrium
for each student under the generalized matching mechanism.
\end{property}

\begin{proof}
For readability, we defer the detailed proof to Appendix \ref{proof:Bayes_NE}.  \qed
\end{proof}

Property \ref{theo:Bayes_NE} shows that students would be forced to reveal their true preferences if they have no exact knowledge of other students' realized preferences except the distribution. This is one extreme case with totally uncorrelated preferences. At the other extreme is the totally correlated case where students have complete information of other students' true preferences. Property \ref{theo:dominated} characterizes the possibilities of students' strategies in the latter scenario.

\begin{property}\label{theo:dominated}
Suppose all students have the same preference over colleges. Besides, this is a common knowledge among all students. Let the quota of their favorite college be $q$. Then truth-telling is a \emph{dominated strategy} for all students except the top $q$ students ranked according to their scores.
\end{property}
\begin{proof}
We present the proof in Appendix \ref{proof:dominated}.  \qed
\end{proof}

In practice, students can only have partial knowledge to other students' submitted preferences, probably through the history of admission data in each individual programme. Besides, students need to estimate the value of $\alpha_c$ in each programme and carefully choose programmes they make strategy on since promoting preference in purpose for programmes which run DA-like mechanism (i.e., $\alpha$ is slightly above zero) may have little effect in raising their merit scores in those programmes. On the other hand, in each admission year programmes would take lessons from admission outcomes of the previous year and adjust their policy on how to determine parameters like $\alpha$ and $h$. These interactions form a complicated extensive-form game which may be repeated infinitely\footnote{Readers should be alert that this may not be construed as the ``infinitely repeated game'' in the standard term of game theory where the set of participants in the game remains unchanged typically. However, in the college admission settings, both the policy makers of the programmes and applicants may be different for each admission year.} at the time granularity of each admission year. This leaves a bunch of open questions to be settled such as: whether there are any equilibrium strategies and if they do exist, would the matching outcomes converge to any of these equilibrium states after finite rounds of games year by year.

Until now we have focused on students' strategies in submitting preferences and assumed that college programmes would reveal their reciprocating factors nonstrategically. While it seems intractable to characterize the dynamic game in the long run, we do obtain some analytical result for the single shot game in one particular admission year. The following property helps relieve our concerns of programmes' strategies so that we may refocus on the manipulations from applicants' side.

\begin{property}\label{theo:college_truth}
Suppose after students submit their preferences, each college $c$ determines its true values of $\alpha_c$ and $h_c(r)$ independently. Then revealing $\alpha_c$ and $h_c(r)$ truthfully would be the dominant strategy for any programme $c\in\mathcal{C}$ in the \emph{ex post} perspective.
\end{property}
\begin{proof}
We present the proof in Appendix \ref{proof:college_truth}.  \qed
\end{proof}

Here we take the ex post perspective\footnote{For the \emph{ex ante} case, say colleges need to submit their parameters before students' submission, Property \ref{theo:college_truth} may not hold since colleges would take students' reaction into consideration before making any decisions. It would be an interesting future work for analyzing the complicated interactions from the ex ante perspective.} since when colleges make their strategies, students' preferences are already submitted and fixed. In practice, when the admission authority evaluate the matching outcome each year, they can only conduct it based on the submitted preference lists of students. It would be expensive (or even impossible) to obtain the true preferences of all applicants via survey\footnote{Still, students have no incentive to reveal their true preference to authority after the matching is over.} or other methods. Thus for the single shot game, colleges would passively reveal their true parameters once they receive applicants' preference lists.

In this section we have shown some fundamental properties of the generalized mechanism. To further investigate the degree of satisfaction participants perceive under different mechanisms, we implement simulations and present the social welfare results different mechanisms induce in the next section.

\section{Performance Evaluation for Generalized Matching Mechanism}\label{sec:simu}
To define and compare social welfare under different matching outcomes, we need to quantify the
utility of each participant/agent in the mechanism first.

Denoted by $\mathcal{S}$ the set of students and $\mathcal C$ the set of colleges.
$\mathcal{I}=\mathcal{S}\cup \mathcal{C}$ is the set of all participants and $\mathcal {O}$ is the
set of all possible matching outcomes. For any agent $i\in \mathcal I$, let $p_i$ be the
reciprocating preference list\footnote{Here when considering students, $p_i$ is students' original preference.} of $i$ and ${o}(i)$ be the set of participants matched to $i$ under
certain outcome ${o}\in \mathcal{O}$.

For each agent $j\in o(i)$, denote integer $r(j,p_i)$ as the ranking agent $j$ appears in $i$'s
preference list $p_i$. For example, $r(j,p_i)=1$ means that $j$ is the first choice in $i$'s
preference list.\footnote{For simplicity, we assume there are no ties in the preference list.
Otherwise, we can break the tie by a random lottery first.} We use $r=0$ to denote the unmatched case.

We assume that agent $i$'s utility is additive and only determined by the orders of the matching set in the
preference list, which can be written as,
\begin{equation}\label{eq:utility}
    u_i(o)=\sum_{j\in o(i), j\in \mathcal{I}} U_i(r(j,P_i)) \qquad \forall i\in \mathcal{I}, o\in\mathcal{O}
\end{equation}
where $U_i(r)$ is non-increasing as integer $r (r>0)$ increases. Intuitively, it means that higher order
(smaller $r$) would generate higher degree of satisfaction for agent $i$.

The aggregate utility of students (or colleges) under outcome $o$ is then:
\begin{eqnarray*}
  \pi_S(o) &=& \sum_{s\in \mathcal{S}} u_s(o), \qquad \forall o\in\mathcal{O}; \\
  \pi_C(o) &=& \sum_{c\in \mathcal{C}} u_c(o), \qquad \forall o\in\mathcal{O}.
\end{eqnarray*}

The \emph{social welfare} is defined as the aggregate utility of all participants in the mechanism,
which can be written as follows:
\begin{equation*}
    \Pi(o)=\sum_{i\in I} u_i(o)=\pi_S(o)+\pi_C(o), \qquad \forall o\in\mathcal{O}.
\end{equation*}

We say matching outcome $o_1$ is more \emph{efficient} than $o_2$ if:
$$\Pi(o_1)>\Pi(o_2) \qquad o_1,o_2\in\mathcal O.$$

We further say mechanism $\mathcal {M}_1$ is more \emph{efficient} than $\mathcal {M}_2$ if $\mathcal
{M}_1$ can always induce a more efficient matching outcome than $\mathcal {M}_2$ under any possible
preference lists of agents. Generally speaking, an outcome would be more efficient if it induces
higher ranked matching. In the context of college admissions, a mechanism which generates more
first-choice matching for students is likely to be more efficient.

\subsection{Simulation Settings}
Suppose there are $10$ students and $5$ colleges with just one quota in each college. We assume the
utility of each agent is as follows,
\begin{eqnarray*}
  U_s(r) &=& 11-r,\qquad r\in\{1,2,\ldots,5\} \\
  U_c(r) &=& 11-r,\qquad r\in\{1,2,\ldots,10\} \\
  U_s(0) &=& U_c(0)=0.
\end{eqnarray*}
Thus a first-ranked matching would bring in utility of $10$ for either students or colleges. Notice
that the efficiency upper bound is $100$ since there are at most five pairs of students and colleges
matched with each other.

The preference lists of students are generated as follows:

Student $s$ evaluates each college $c$ by this formulae,
$$g_s^c=\beta g^c+(1-\beta) g_s(c),\qquad \beta\in [0,1]$$ where
$(g^{c_1},g^{c_2},g^{c_3},g^{c_4},g^{c_5})=(100,90,80,70,60)$ denotes the social reputation of each
college and $g_s(c)$ denotes the individual preference of student $s$, which is independently drawn
from uniform distribution over $[0,100]$. The factor $\beta$ denotes the \emph{degree of correlation}
for students' preferences.\footnote{$\beta$ can be also regarded as a signal implying how much information students may know about other students' preferences. If $\beta=1$, each student has complete information to others' preferences since they share exactly the same opinion over colleges. If $\beta=0$, each student has independent opinion over colleges and thus has no posterior knowledge on others' preferences. In general, students' individual opinion would be more or less be affected by the common social opinion and each of them would have partial knowledge to other students' preferences.} The preference list $p_s$ can therefore be deduced by comparing the value
of $g_s^c$ for different $c$, i.e., $s$ prefers $c_1$ to $c_2$ if $g_s^{c_1}>g_s^{c_2}$.

The preference lists of colleges are generated as follows:

College $c$ evaluates each student $s$ by this formulae,
$$mrt_c(s)=(1-\alpha_c)\cdot f_s+\alpha_c \cdot h_c(r)$$
where $h_c(r)=110-10r$ if $c$ is the $r$-th choice in $p_s$. $f_s$ is the exam score of student $s$,
which is independently drawn from uniform distribution over $[0,100]$. We generate the reciprocating
factor $\alpha_c$ for each college $c$ by the following distribution:
$$\alpha_c=\begin{cases}
0 \qquad prob=1/2 \\
1 \qquad prob=1/2
\end{cases}$$

The preference list $p_c$ can then be inferred by comparing the value of $mrt_c(s)$ for different
$s$. In case of ties, namely, $mrt_c(s_1)=mrt_c(s_2)$, $s_1$ is favored over $s_2$ if
$f_{s_1}>f_{s_2}$.

\subsection{Efficiency Results}
After generating the reciprocating preference lists for both sides, the matching outcome can be
obtained by applying the student-proposing DA algorithm. We then calculate the social
welfare under the matching outcome.

In the simulation, we use $\beta\in[0,1]$, with a step size of $0.01$. For each particular $\beta$,
we repeat the process of preference generation for $1000$ times and compute the average values of
aggregate utility and social welfare. For comparison, we also calculate the average social welfare
under pure GS mechanism, which can be easily implemented by just setting $\alpha_c\equiv 0$ for each
college $c$ in the distribution of reciprocating factors.\footnote{Here in GS mechanism we let the true preferences of colleges be endogenized via students' exam scores only, since this is how the admission authority evaluates the social welfare of DA in practice. (One such example is that schools' welfare is measured by the average score of their admitted students in equation (1) of \cite{ChiuWeng}.) Thus our simulation result can serve as a predictor for comparing the official evaluation results, if there are any, of DA and hybrid mechanism released by corresponding admission authorities. Nevertheless, if we accept that reciprocating preferences are the ``true'' preferences of colleges and regard submitting $\alpha_c=0$ as certain manipulation to the true preferences, then our comparison of social welfare of DA and hybrid mechanism should have been based on the same (true) preferences of colleges. Since the latter kind of comparison is already fully covered  by Property \ref{theo:college_truth} theoretically, we feel that there would be no need to conduct further simulations on it (which would not produce any ``surprising'' results beyond theory).}

\begin{figure}[htb]
  \centering
  \includegraphics[width=0.6\textwidth]{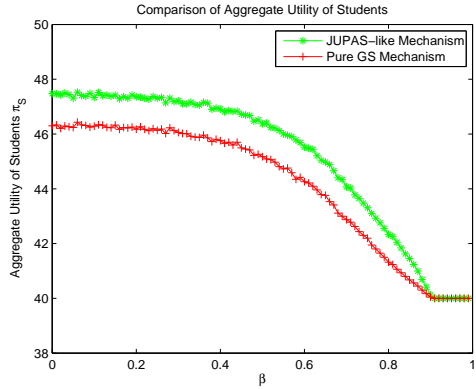}\\
  \caption{Expected Aggregate Utility of Students over Different Degrees of Preference Correlation}\label{Uti_Student}
\end{figure}

Fig. \ref{Uti_Student} presents the simulation results for aggregate utility of students over
different degrees of preference correlation. As we can see, the expected aggregate utilities under
both mechanisms decrease as $\beta$ increases from zero to one. The upper bound of $\pi_S$ is 50
since there are at most five students who can receive their first-choice offers from colleges. When
$\beta$ is small (less than around $0.4$), we can achieve about $94\%$ and $92\%$ of the upper bound
under the JUPAS-like hybrid mechanism and the pure Gale-Shapley (GS) student optimal mechanism
respectively. As $\beta$ rises, the preference list of each student becomes more and more similar and
there are more collision between students' interest in colleges. When $\beta$ is large enough
(greater than $0.91$), the pre-determined social reputation of each college becomes the dominant
factor in forming the preference lists of students. That is to say, $P_s$ would be
$c_1>c_2>c_3>c_4>c_5$ for all students. Thus college $c_1$ would always bring utility of $10$ to the
student community, $c_2$ brings $9$ and so on, which forms this lower bound of
$\pi_S^{low}=10+9+8+7+6=40$.

We also notice in Fig. \ref{Uti_Student} that in general $\pi_S$ is slightly larger under the hybrid
mechanism than under the GS mechanism. This result helps ease the concern that the JUPAS-like
mechanism would hurt the interest of student community as a whole. The intuition is that while some
students with higher exam scores may get worse in the hybrid mechanism, other students with slightly
lower scores would have more chances to enter the programmes/colleges in which they are really
interested.

\begin{figure}[htb]
  \centering
  \includegraphics[width=0.6\textwidth]{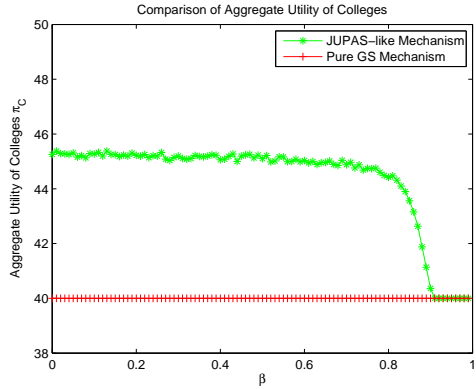}\\
  \caption{Expected Aggregate Utility of Colleges over Different Degrees of Preference Correlation}\label{Uti_College}
\end{figure}

Fig. \ref{Uti_College} shows the result for aggregate utility of colleges. In the pure GS mechanism,
since preferences of colleges are only determined by the exam scores of students, all colleges would
share exactly the same preference list over students. Therefore the student with the highest score
would always bring utility of $10$ to the college side, the student with the second highest score
brings $9$ and so on. That is why $\pi_C$ would be always equal to $10+9+8+7+6=40$ under the GS
mechanism. The upper bound of $\pi_C$ is also 50, which occurs only if all five colleges realize
their first choices. As shown in the figure, we can achieve about $93\%$ of the upper bound under the
JUPAS-like hybrid mechanism when $\beta\in[0,0.8]$. As students' preferences become more similar,
colleges tend to have similar reciprocating preference, which means more conflict would occur among
different colleges. Thus as $\beta$ continues increasing from about $0.8$, the aggregate utility of
colleges would decrease rapidly. When $\beta$ is large enough (greater than $0.91$), all colleges
would share the same preference over students. Thus student with the highest score would always bring
utility of $10$ to colleges, student with the second highest score brings $9$ and so on, which forms
the lower bound of $\pi_C^{low}=10+9+8+7+6=40$.

\begin{figure}[htb]
  \centering
  \includegraphics[width=0.6\textwidth]{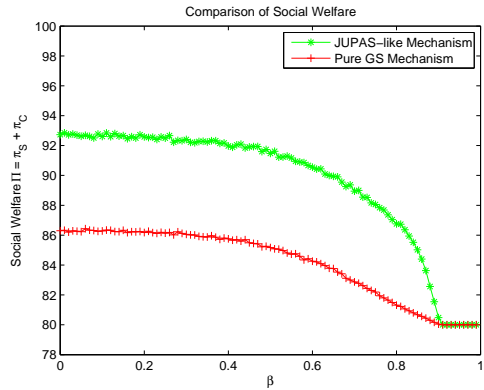}\\
  \caption{The Expected Social Welfare over Different Degrees of Preference Correlation}\label{social_welfare}
\end{figure}

The expected social welfare under different values of $\beta$ is shown in Fig. \ref{social_welfare}.
In the same way, we obtain the upper bound of social welfare as $\Pi^{up}=100$. When $\beta$ is small
(less than $0.5$) and students have various preferences over colleges, we achieve about $93\%$ and
$86\%$ of the upper bound under the hybrid mechanism and the GS mechanism respectively. When $\beta$
approaches to one and students share common opinion on colleges, the ratio would both decrease to
$80\%$. This comparative result of social welfare helps justify the implementation of JUPAS-like
mechanism in college admissions. The transfer from the hybrid mechanism to the GS mechanism can only
achieve the well-known incentive compatible property at the cost of potentially significant loss of
efficiency, especially when students have independent opinions on different colleges.

\subsection{Strategy Issues}
As we have mentioned in the last section, when students' preferences become more correlated, it would
be easier for strategic students to manipulate the matching results successfully. Thus in the
following simulation we use $\beta=1$ as an example to illustrate the effect of possible strategies
of students.

When considering $\beta=1$, the true preferences of students are all the same as $p_s:
c_1>c_2>c_3>c_4>c_5$. We assume there are two types of students: the truthful ones and the strategic
ones. The truthful students would always submit their true preferences while the strategic students
are likely to manipulate their submitted preference lists based on the information they already
collect. In our simulation setting, the key information strategic students may infer is the
reciprocating factor of each college. Notice that in practical college admissions mechanism like
JUPAS, students would know their own examination scores and overall performances of other students
before submitting their applications. Here in the simulation we assume the strategic students would
know exactly their score ranking out of all students. Then one possible strategy would be as follows:

\begin{equation*}\mbox{Strategy $\mathbb{S}$: submit}
\begin{cases}p_s^{(0)}: c_1>c_2>c_3>c_4>c_5 & \mbox{$f_s$ is the highest score;}
\\
p_s^{(1)}: c_2>c_3>c_4>c_5>c_1& \mbox{$f_s$ is the second highest score;}\\
p_s^{(2)}: c_3>c_2>c_4>c_5>c_1& \mbox{$f_s$ is the third highest score;}\\
p_s^{(3)}: c_4>c_2>c_3>c_5>c_1& \mbox{$f_s$ is the fourth highest score;}\\
p_s^{(4)}: c_5>c_2>c_3>c_4>c_1& \mbox{else.}
\end{cases}
\end{equation*}

For student with the highest score, since he/she can be assured to get into $c_1$, there is no need to strategize. For other students, since $c_1$ is already occupied, they can shield their first choice and list $c_1$ as their last choice. Thus the student would manipulate
his/her first and second choice as $c_4$ and $c_5$ in order to obtain better opportunity.

\begin{figure}[htb]
  \centering
    \includegraphics[width=0.6\textwidth]{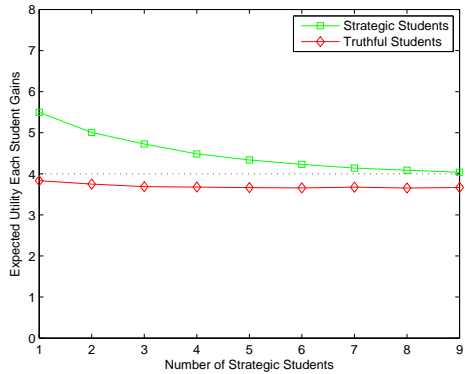}
    \caption{Expected Utility Each Strategic (Truthful) Student Gains over Difference Number of Strategic Students}
    \label{fig:stu_gain_compare_random_alpha}
\end{figure}

Fig. \ref{fig:stu_gain_compare_random_alpha} shows the expected utility of students of either type.
Since the aggregate utility of all ten students are always $40$ under $\beta=1$, the average utility
per student would be $4$ when there are no strategic students, which serves as the benchmark for
evaluating the effectiveness of possible strategies. As we see, when there are only one strategic
student, strategy $\mathbb{S}$ would bring expected utility of $5.5$, much more than the benchmark
utility of $4$ when all students act truthfully. As the number of strategic students increases, the
expected utility per strategic student would decrease gradually, but still remain larger than $4$.
This result validates the effectiveness of the proposed strategy $\mathbb{S}$ under our simulation
setting. The existence of strategic students would make the students who act truthfully worse-off,
however, from the figure we see this side-effect is bounded as the number of strategic students
increase. Even when all other students act strategically, the remaining truthful student could still
achieve about $92\%$ of the benchmark utility.

\begin{figure}[htb]
  \centering
    \includegraphics[width=0.6\textwidth]{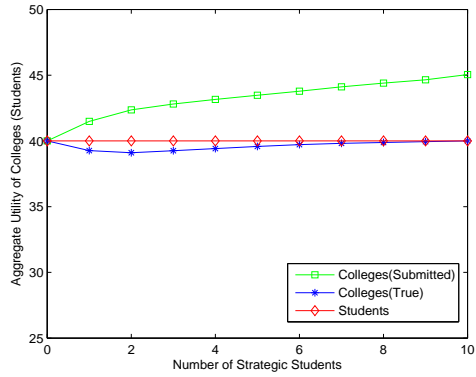}
    \caption{Expected Aggregate Utility of Colleges (Students) over Difference Number of Strategic Students}
    \label{fig:uti_if_strategy_random_alpha}
\end{figure}

Fig. \ref{fig:uti_if_strategy_random_alpha} presents the aggregate utility of both sides when
strategic students exist. The aggregate utility of students remains $40$ since all students share the
same preference list as $R_s$. The aggregate utility of colleges (the square-marked line) is
calculated based on the \emph{submitted} preference lists of students, since in practice the true
preferences of students are unrevealed to the public. When there are no strategic students, all
students submit the same preference list. Thus all colleges would share exactly the same
reciprocating preference over students, ordered solely by the scores of students. This explains
$\pi_C=10+9+8+7+6=40$ when the number of strategic students is zero since student with the highest
score always brings utility of $10$ to colleges, student with the second highest score brings $9$ and
so on. As there are more strategic students, colleges other than $c_1$ would probably enroll students with higher merit scores since strategic students list colleges as more favored choices. Thus the aggregate utility of colleges based on the submitted preferences would increase.

For comparison, we also show the aggregate utility of colleges based on the \emph{true}
preferences of students in the asterisk-marked line. In our setting of $\beta=1$, the true
reciprocating preferences of colleges would be all the same, ranking the students according to their
exam scores. When all students tell truth or all apply strategy $\mathbb{S}$, $c_1$ would enroll the
student with the highest score, $c_2$ gets the student with the second highest score and so on. Thus
$\pi_C^{{t}}=10+9+8+7+6=40$. In other cases, some strategic students with lower scores may receive better
offer than they could achieve when telling truth. For instance, $c_5$ may not receive the student with the
fifth-highest score as anticipated. Therefore the aggregate utility of colleges
would decrease from the initial value of $40$. However, this adverse effect is bounded as seen from
Fig. \ref{fig:uti_if_strategy_random_alpha}, because when there are too many strategic students in
the system, it would be difficult for strategic students with very low scores to achieve better
matching outcome.

Fig. \ref{whole_fig:diff_beta}\subref{fig:stu_uti_stra_vs_true_beta_0} - \subref{fig:stu_uti_stra_vs_true_beta_1} show the
expected utility each particular student gains under \emph{different degrees of preference
correlation}. The horizontal axis denotes the score ranking of student, and each point drawn in the
figure is the average value for one thousand iterations. The possible strategy students could make is strategy
$\mathbb{S}$. We assume there is only one student behaving strategically while the rest of students
would reveal their true preferences. From the figures we see that for the top one student, he/she
would always achieve utility of ten since according to strategy $\mathbb{S}$ he/she would just act
truthfully and can always get the first choice. Fig. \ref{whole_fig:diff_beta}\subref{fig:stu_uti_stra_vs_true_beta_0} verifies
that truth-telling is an equilibrium strategy when students preference are totally uncorrelated:
the students would not be better off when deviating unilaterally from the equilibrium, whereas under
$\beta=1$, truth-telling would be dominated by the strategy $\mathbb{S}$, as shown in Fig.
\ref{whole_fig:diff_beta}\subref{fig:stu_uti_stra_vs_true_beta_1}. For general value of $\beta\in (0,1)$, the students with
higher ranking would tend to act truthfully while the students with lower ranking tend to behave
strategically by choosing the less famous colleges. As the preferences of students becomes more
correlated, more students would have incentive to make certain strategy rather than acting
truthfully.

\begin{figure}[htb]
  \centering
  \subfloat[$\beta=0$]{
  \label{fig:stu_uti_stra_vs_true_beta_0}
  \begin{minipage}{0.47\textwidth}
    \centering
    \includegraphics[width=1.1\textwidth]{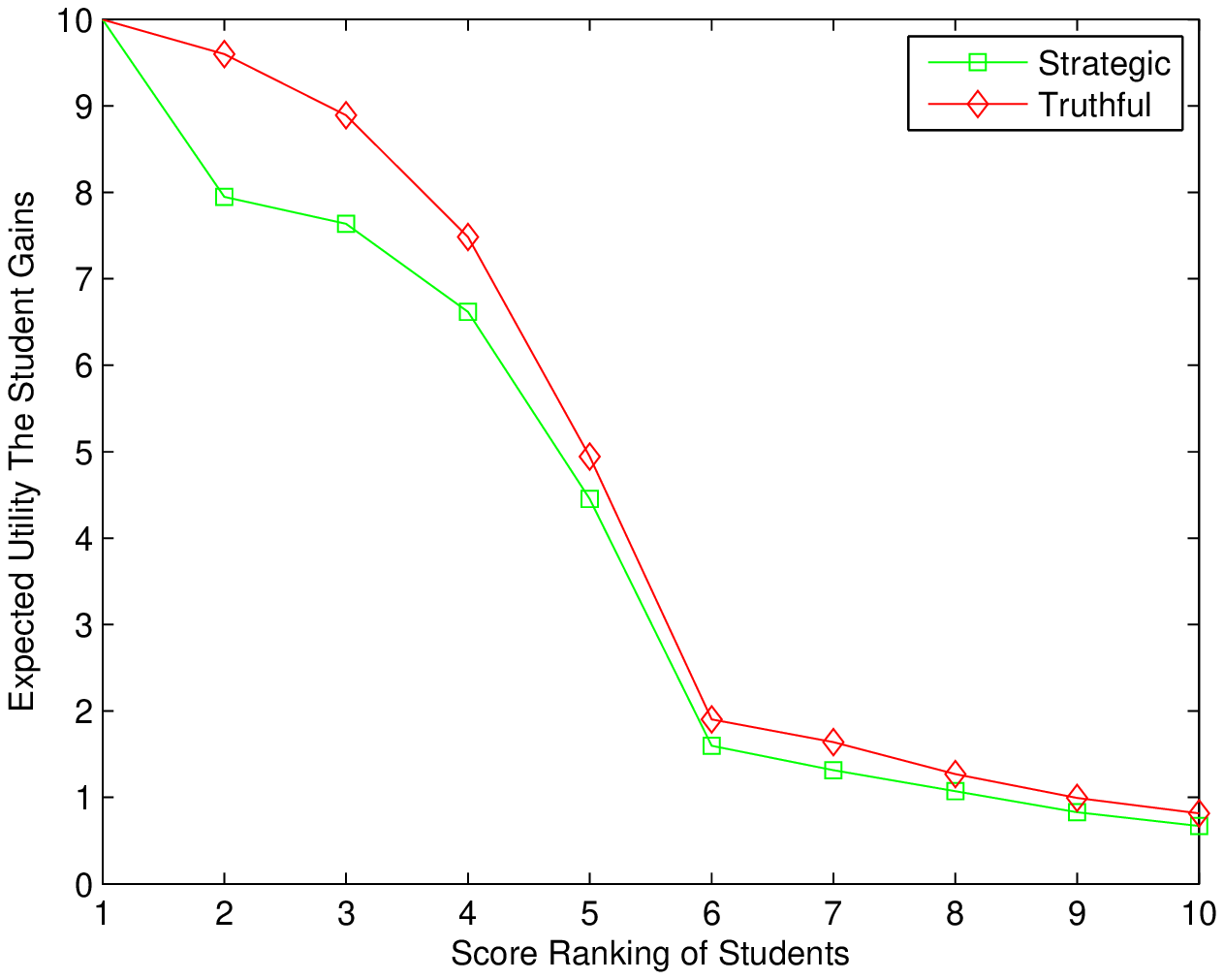}
  \end{minipage}
  }
  \subfloat[$\beta=0.2$]{
  \label{fig:stu_uti_stra_vs_true_beta_0.2}
  \begin{minipage}{0.47\textwidth}
    \centering
    \includegraphics[width=1.1\textwidth]{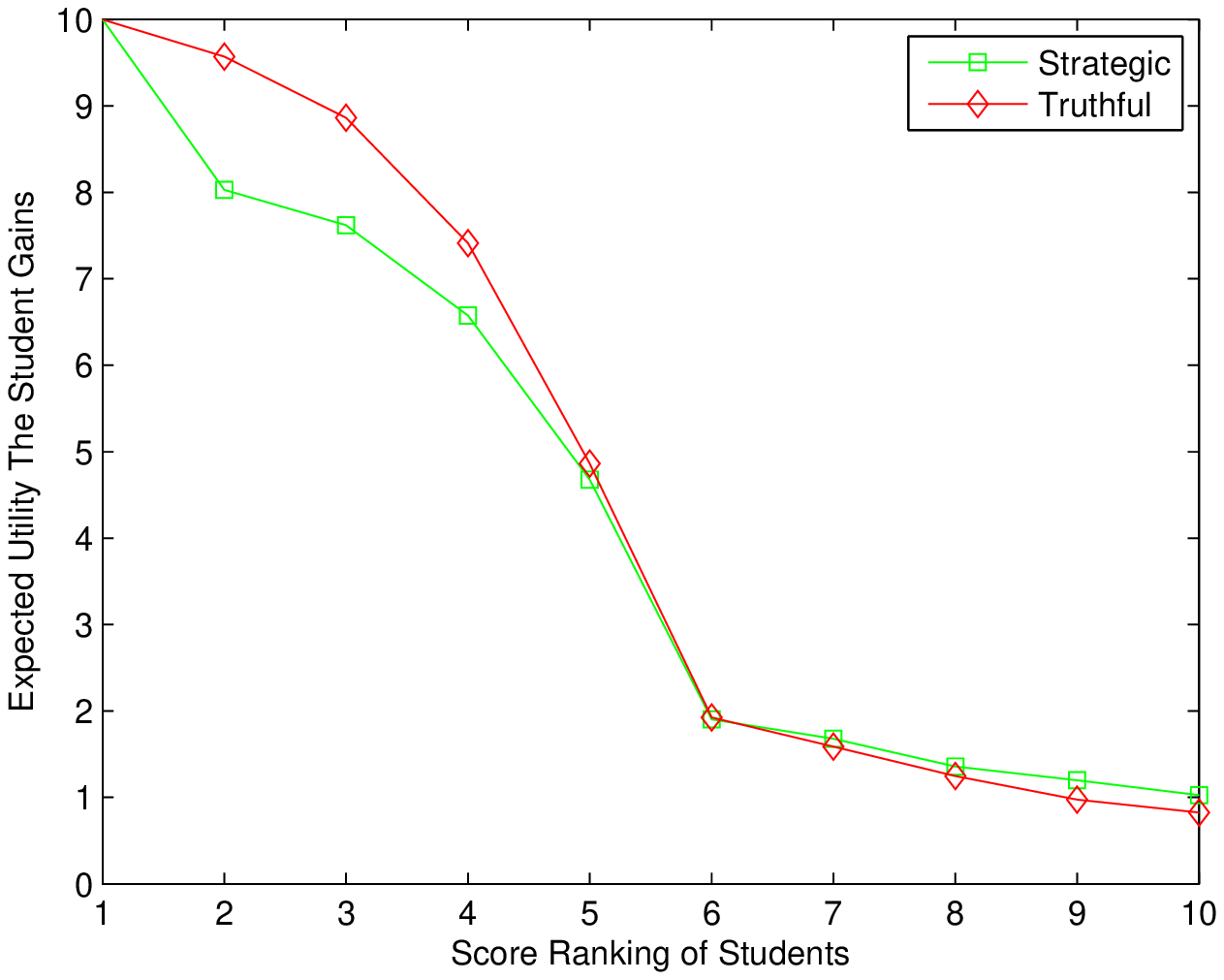}
  \end{minipage}
  }
  \\
  \centering
  \subfloat[$\beta=0.4$]{
  \label{fig:stu_uti_stra_vs_true_beta_0.4}
  \begin{minipage}{0.47\textwidth}
    \centering
    \includegraphics[width=1.1\textwidth]{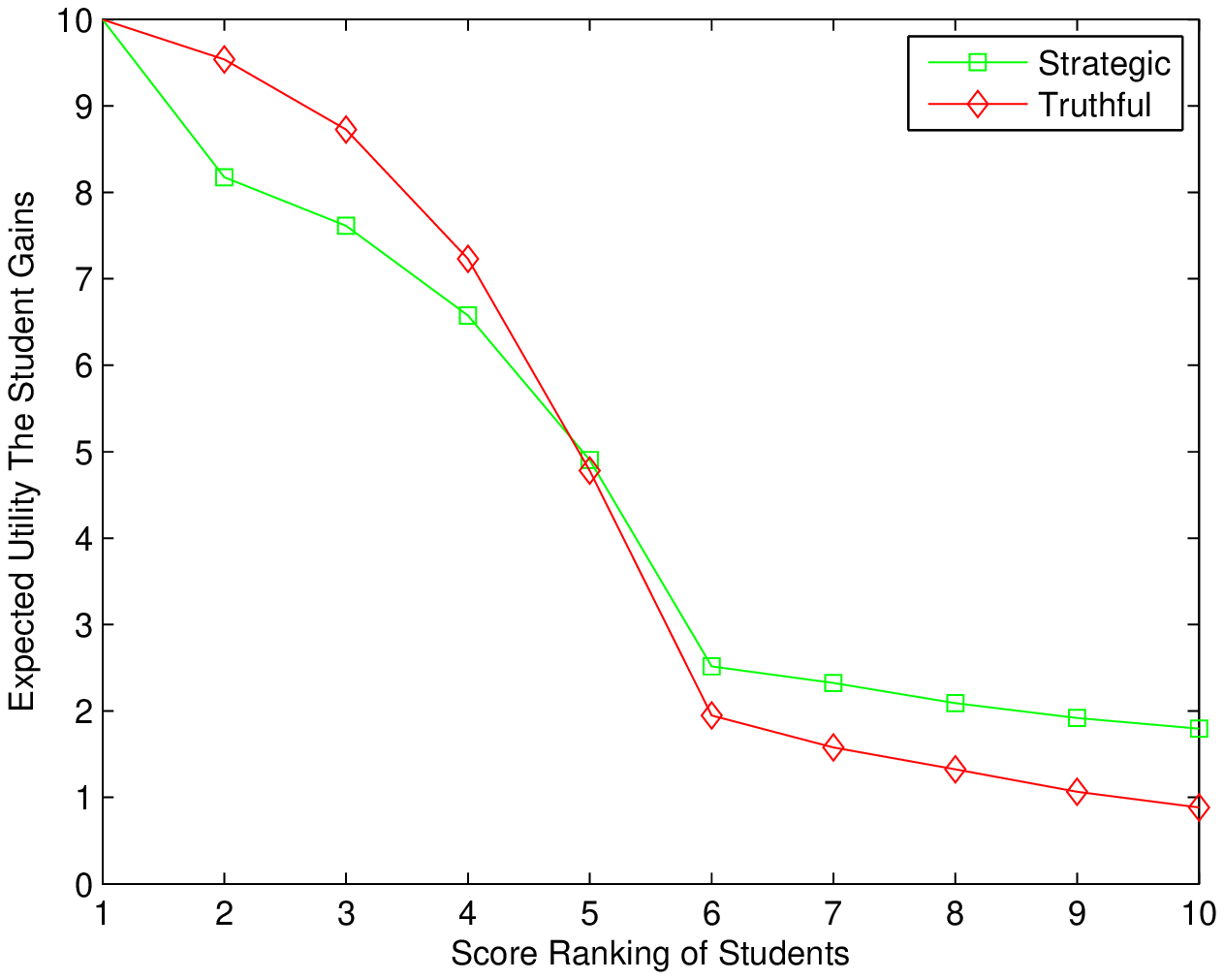}
  \end{minipage}
  }
  \subfloat[$\beta=0.6$]{
  \label{fig:stu_uti_stra_vs_true_beta_0.6}
  \begin{minipage}{0.47\textwidth}
    \centering
    \includegraphics[width=1.1\textwidth]{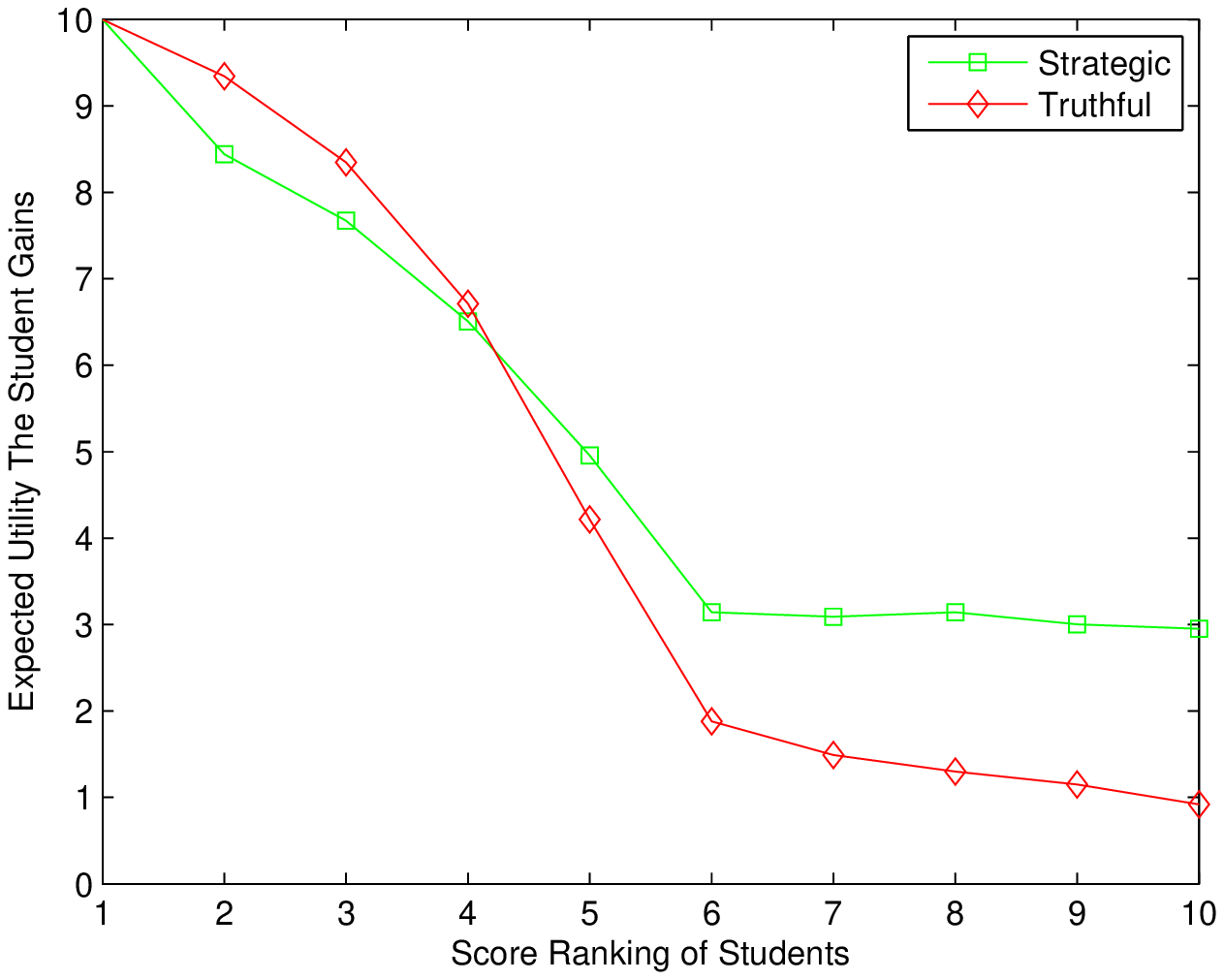}
  \end{minipage}
  }
  \\
  \centering
  \subfloat[$\beta=0.8$]{
  \label{fig:stu_uti_stra_vs_true_beta_0.8}
  \begin{minipage}{0.47\textwidth}
    \centering
    \includegraphics[width=1.1\textwidth]{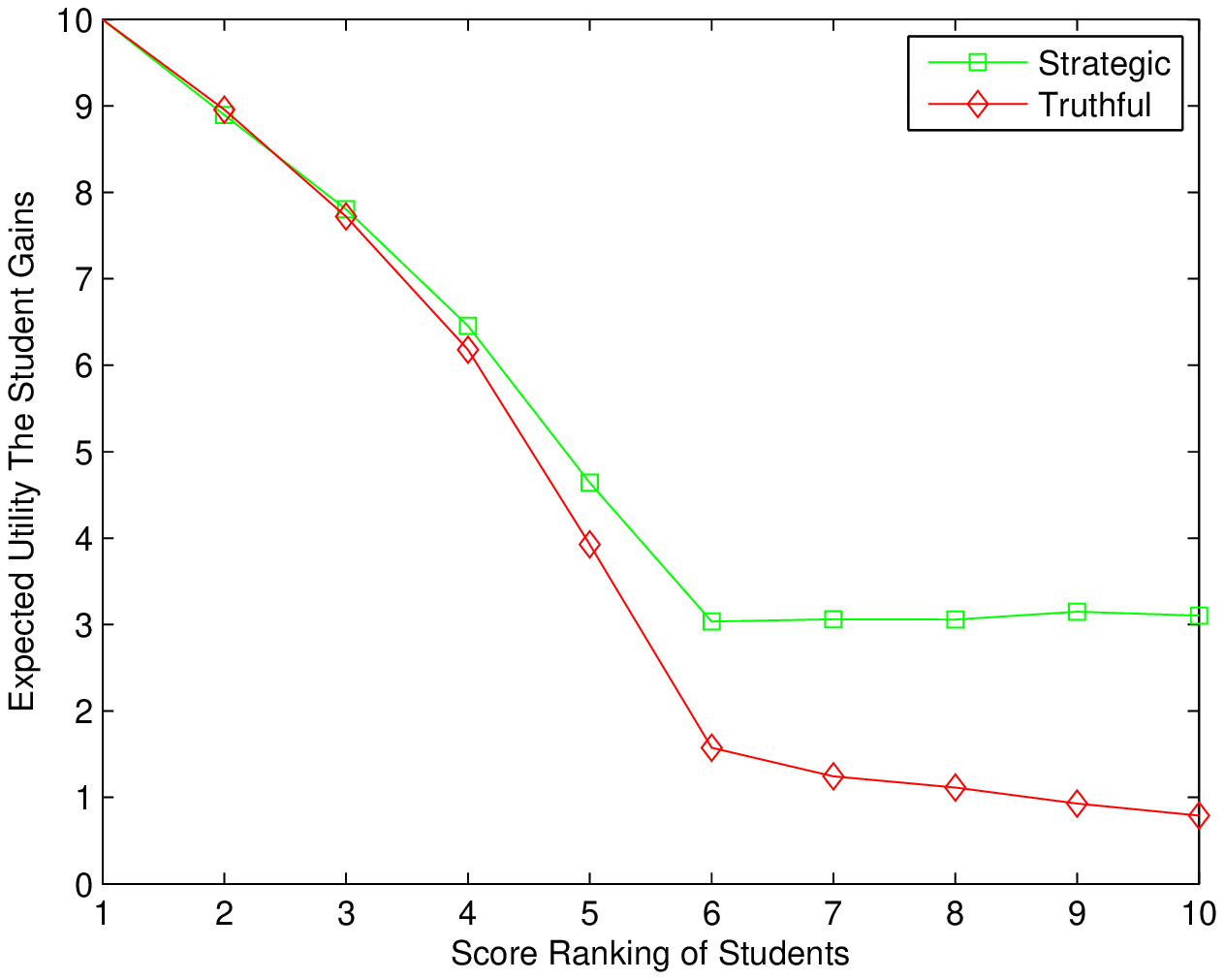}
  \end{minipage}
  }
  \subfloat[$\beta=1$]{
  \label{fig:stu_uti_stra_vs_true_beta_1}
  \begin{minipage}{0.47\textwidth}
    \centering
    \includegraphics[width=1.1\textwidth]{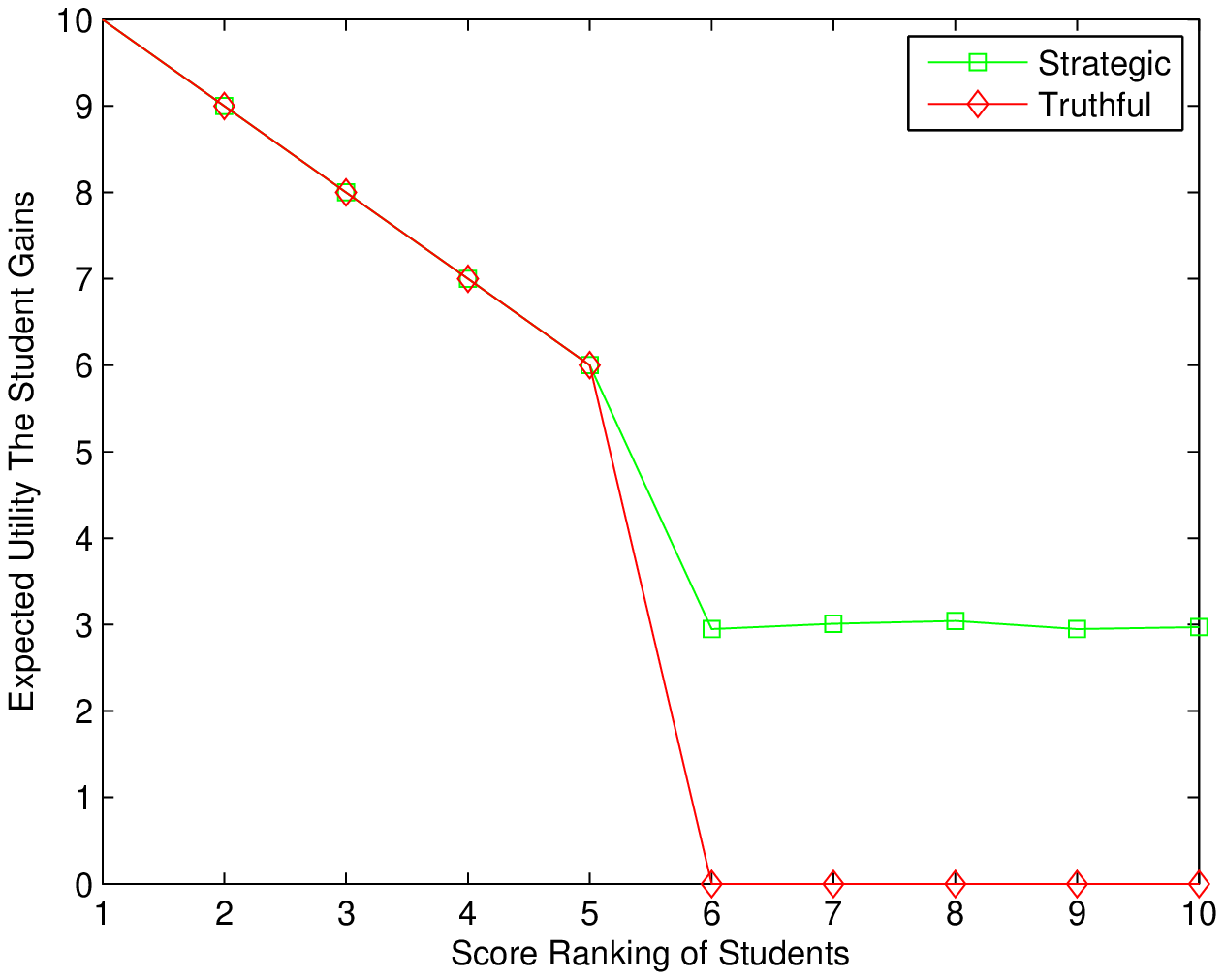}
  \end{minipage}
  }
  \caption{The Expected Utility Each Student Gains under Different Values of $\beta$, Sorted by Score Ranking of Students}
  \label{whole_fig:diff_beta}
\end{figure}

\section{Further Discussions}\label{sec:discussion}
\textbf{Pros and Cons Regarding DA and the Generalized Mechanism}
We have shown that under complete information, students do have strong incentive to lie; on the contrary, in the absence of any posterior knowledge, truth-telling is still the NE of the generalized mechanism. In the simulation, $\beta$ serves as an indicator of how much information each student may know about others' preferences. We would like to highlight through this paper the debate whether college side would all agree on applying DA so as to achieve strategy-proofness property, or some colleges would be willing to enroll students with higher interests in risk of potential manipulation. Our generalized model just establishes a framework to admit such tradeoff between gain in efficiency and risk in truthfulness, and gives each college the freedom to adjust their policy through RF. It is worth noting that the proposed model does not intend to repel the current trend of practice of DA. DA is essentially encompassed in our framework and we just give it a second thought from the perspective of individual college.

To further understand the potential downside of DA, let us consider the following scenario: if DA is enforced in all participating programs of JUPAS, for those less popular programmes, lots of students with lower scores but more interest would be wiped out, and instead most quota of these programmes would be occupied by students with higher scores but less interest (their scores are not high enough to get into popular programmes which they listed as top choices, so finally end up with an offer from bottom choices).\footnote{We notice a recent publication of \cite{ChiuWeng} which shares the same concerns as ours. Instead of considering the hybrid mechanism, \cite{ChiuWeng} rationalizes the so-called ``pre-commit'' strategy of colleges to admit applicants who rank them as their top choices.} One conventional policy in Hong Kong's universities is that students are allowed to switch programme inside the university in the end of the first year in case students feel the current programme is not suitable for themselves after one year's study. In this case most unsatisfied students enrolled by these unpopular programmes would apply to change. If approving most of these applications, those programmes may suffer from high vacancy rate. However, if rejecting most of them, the majority of students in the programme would feel unsatisfactory. This becomes a dilemma for the unpopular programmes. The main objective of DA is to achieve stability; however, it may end up with ``unstable'' outcomes in the long run. This explanation helps rationalize the current selection of hybrid manner in JUPAS rather than hasty replacement with DA. Unfortunately, these concerns and debates from the college side are largely ignored in existing literature.

\textbf{Extension to Marriage Problem}
The generalization of college admissions mechanism also applies to the classic stable marriage
problem discussed in the original paper of \cite{GS}. Consider the situation when a woman
faces two men' proposals and has no clear idea which one she strictly prefers. Technically, we call there exists a \emph{tie} in the woman's preference list.\footnote{Readers could refer to this comprehensive survey for recent
development on the marriage problem in \cite{Survey_Marry_prob}, especially section on ``incomplete
preference lists with ties''.} Roughly speaking, the existing literature mainly provides two
solutions to deal with marriage problem with tie. A quick solution is just requiring the woman to flip
a coin to produce a strict preference list so that the previous mechanism could be applied
immediately. The other solution concerns how to find the optimal matching outcome among all these
artificial tie-breaking possibilities, for instance, the polynomial-time stable improvement cycles
algorithm raised in \cite{Tie-breaking}.

However, suppose the woman has the wish: ``I'll choose the man who loves me most!'' Yet
the fact is that the first man has listed her as the first choice while the second man listed this
woman as his last choice and was rejected by every other women in the previous rounds, assuming
men-proposing DA algorithm \citep{GS} applied here. Obviously a ``reasonable'' matching mechanism
should respect each participating agent' wish and therefore always match the first man with the
woman. However, existing matching mechanisms provide no channel for agents to express such kind of
reciprocating preferences, although mutual appreciation is a very natural and common factor in
determining marriage mates.

Our generalized model can be easily carried over to the above scenario. By analogy, we can define the
\emph{merit score} of man $m$ in woman $w$ as follows:\footnote{Similarly, by exchanging the notation
of $m$ and $w$ in the equation, we may generate the merit score of woman $w$ in man $m$.}
\begin{equation*}
    {mrt}_w(m)=(1-\alpha_w)\cdot f_w(m)+\alpha_w\cdot h_w(r),\quad \alpha_w\in [0,1]
\end{equation*}
where $f_w(m)$ is the initial rating score of $m$ in $w$; $r$ is an integer denoting $w$'s position
in $m$'s initial preference list and $h_w(r)$ is the bonus score which is decreasing over $r$, i.e.,
the smaller $r$ is, the more bonus score the man can achieve; $\alpha_w$ is the reciprocating factor
of the woman, denoting $w$'s sensitivity to other men's evaluation to herself. If $\alpha_w$ is set
to zero, which is equivalent to the classic model where a woman only believes her own feeling and
judgement. Conversely, if $\alpha_w$ equals to one, woman $w$ is extremely sensitive to men' opinions
on her and hopes to match with the one who loves her most. In general a woman may set $\alpha_w$
between zero and one to strike a balance between her initial feeling and men's appraisal to her.
Choosing a mate with mutual appreciation seems more ``reasonable'' and natural in practical marriage.
The parameter of \emph{reciprocating factor} provides an opportunity for agents to more fully express
their wishes or perceived payoff than in the classic model. Finally by comparing the \emph{merit scores} for different
men, we can reproduce the preference list of $w$. In case of tie in merit scores, we resort to
original score for tie breaking.

\textbf{Related Work}
Some recent developments in matching theory share our concerns that the transition from the Boston
mechanism to the GS mechanism is not problem-free and object to the hasty rejection of the Boston
mechanism. One main research direction is the analysis of efficiency in school choice (SC) setting
where schools do not have \emph{strict} preferences over students and have to largely rely on
\emph{random lotteries} to determine their preferences.

Abdulkadiro\u{g}lu, Che and Yasuda first brought up the uncertainty factor of lotteries into
efficiency consideration \citep[see][]{Expanding}. They showed an elegant example when students share
identical ordinal preference but differ in preference intensities, the Boston mechanism can dominate
the GS mechanism in terms of \emph{expected} cardinal efficiency. A new Choice-Augmented Deferred
Acceptance (CADA) mechanism was proposed accordingly which supports a greater scope of efficiency
than the pure GS mechanism \citep[see][]{Expanding}. The same authors further generalized the single example
into a ``baseline model'' where students have common ordinal preferences and schools have no
priorities in \cite{Boston_reconsi}. Besides, \cite{SCcase} showed that the above
analytical results could extend to more realistic cases such as weak priorities by simulation.

Featherstone and Niederle then classified the efficiency issue in SC into three categories
\citep[see][]{Ex_ante}:

\begin{description}
  \item[Ex post:] Each student knows preferences of other students and lottery results in each school.
  The matching outcome as well as the efficiency are both deterministic.
  \item[Interim:] Students know preferences of other students but remain unknown to the lottery
  results, i.e., we would investigate the efficiency before the lotteries are drawn. The distribution
  of lottery results would induce an expected, other than deterministic, value of efficiency.
  \item[Ex ante:] Students only know the distribution of other students' preferences and still remain ignorant
  of lottery results.
\end{description}

The authors concluded in the same paper that, when student preferences are uniformly distributed and
schools are completely symmetric, the Boston mechanism can first-order stochastically dominate the GS
mechanism in terms of \emph{ex ante} efficiency, both in theory and in the laboratory.

Following the efficiency classification above, results in \citep{Expanding,Boston_reconsi,SCcase}
would all fall into the \emph{interim} viewpoint with highly \emph{correlated} student preferences,
which complements the conclusion in \cite{Ex_ante} under \emph{independent} student preferences.

Although sharing the same caution against a hasty replacement of the Boston mechanism, our paper
stands distinct from these above works in several aspects:
\begin{itemize}
  \item One \emph{key} assumption for the above works is the \emph{weak} or even no priorities in schools such that
  lotteries are largely relied upon in schools in order to break the tie. It is this ``randomness'' that causes the potential
  \emph{ex ante} efficiency loss of the GS mechanism. However, in practical CA context where students'
  scores rather than the random lotteries play the decisive role in admissions, the above assumption
  would no longer hold, so would the corresponding conclusions.
  \item Our paper follows a distinctive and unique research direction and shows that even when the priorities in
  schools are \emph{strict}, Boston still exhibits some prominent properties such as respecting the interests
  of applicants. The sociological consideration of agents' preferences has been largely ignored in previous research of college admissions system.
\end{itemize}

In terms of interdependent preferences, we notice a recent work of \cite{Interdependent} proposing
``interdependent values'' in two-sided matching which can be regarded as a complementary notion to
our reciprocating preferences. In \cite{Interdependent} the authors argue that a college $c$'s
evaluation of a student $s$ could be affected by (or depend on) other colleges' evaluation to this
student $s$, while we consider the scenario where $s$'s value to $c$ is dependent on $c$'s value to
$s$.

\section{Concluding Remarks}\label{sec:conclusion}
In this paper we propose a generalized matching mechanism which can incorporate both BM and DA.
Inspired by a practical college admissions system, i.e., JUPAS in Hong Kong, we propose a
common parameter, namely \emph{reciprocating factor} (or $\alpha$), for the generalized matching model.
This parameter serves as a bridge between BM and DA: when all $\alpha$ are set to zero, the matching
mechanism would be equivalent to pure BM; when all $\alpha$ equal to one, the matching mechanism
reduces to pure DA. Practical systems like JUPAS can be regarded as a hybrid of BM and DA with
reciprocating factor between zero and one. In the context of college admissions, reciprocating factor
is of practical significance for programmes to achieve the tradeoff between eligibility and real interest of enrolled students. We have discussed the advantage and disadvantage of DA and the generalized mechanism extensively and highlighted the debate from the perspective of colleges side. These potential concerns and doubts from colleges would help justify the current selection of hybrid system in Hong Kong.

With regard to future works, our paper could be further improved and extended in several directions:
\begin{enumerate}
  \item One major open question regarding the generalized mechanism is how history data of admission may affect the strategies and choices of current applicants. Are there any equilibria for this extensive-form game that all participants would conform to? How would their behaviors in the system evolve in the long run? We hope that these strategic issues can be tackled in the future.
  \item How the reciprocating factors of colleges are distributed is another interesting direction which we have not yet investigated in details. \cite{ChiuWeng} showed that both popular and unpopular colleges have motives of giving preferential treatment to applicants who rank them as top choices. Does this imply that different colleges tend to have similar reciprocating factors in practice? This question may not be addressed immediately since these ``inside'' information of colleges is typically inaccessible to the public.
  \item We have showed some positive results of the hybrid system through simulation, however, simulation alone is insufficient to cover all complicated strategies of students. As the future work, we plan to design some lab experiments which would involve real participants to play and learn during the game. We believe that these empirical results would provide more evidence and insight for supporting and spreading the adoption of hybrid system like JUPAS in Hong Kong.
\end{enumerate}

\appendix
\section*{Appendices}
\section{Proof of Property \ref{theo:no_truthful_mech}}\label{proof:no_truthful_mech}
It is easy to see that when all RF are zero, the generalized mechanism reduces to DA which is strategy-proof. To prove a mechanism is non-truthful in general, we just need to find one counter-example for any given RF which are not all zero.

Suppose all colleges have zero RF except college $c$ with positive $\alpha_c$. Besides, $c$ has quota for only one student. Let $\alpha_c$ equal to any small $\epsilon>0$. Suppose that there are two students $s_1$ and $s_2$. $s_1$ achieves a higher score $f_{s_1}$ and $c$ is his/her second choice school; $s_2$ gets a lower score $f_{s_2}$ but lists $c$ as his/her first choice. The merit score of $s_2$ in $c$ is then:
\begin{equation*}
    mrt_c(s_2)=(1-\epsilon)\cdot f_{s_2}+\epsilon \cdot h_c(1)
\end{equation*}
If telling truth, $s_1$'s merit score in $c$ would be:
\begin{equation*}
    mrt_c(s_1)=(1-\epsilon)\cdot f_{s_1}+\epsilon \cdot h_c(2)
\end{equation*}
By letting $mrt_c(s_1)<mrt_c(s_2)$, we have the condition when $c$ prefers $s_2$ to $s_1$:
\begin{equation*}
  f_{s_1}-f_{s_2}<\frac{\epsilon}{1-\epsilon}(h_c(1)-h_c(2))\triangleq\delta
\end{equation*}
Since $0<\epsilon<1$ and $h_c$ is a (strictly) decreasing function of choice order, we have $\delta>0$.

We further assume that $s_1$ knows he/she would not succeed in his/her first choice college, thus the best he/she can achieve is getting into college $c$. By listing $c$ as the first choice, $s_1$ would obtain a higher merit score in $c$:
\begin{equation*}
    mrt_c(s_1)^S=(1-\epsilon)\cdot f_{s_1}+\epsilon \cdot h_c(1)
\end{equation*}
which is always larger than $mrt_c(s_2)$.

Therefore, as long as $f_{s_1}-f_{s_2}<\delta$, $s_1$ would have incentive to deviate from truth-telling for better matching outcome, which proves that the hybrid mechanism is not strategy-proof in general.

\section{Proof of Property \ref{theo:Bayes_NE}}\label{proof:Bayes_NE}
Here we follow the terms in \cite{Ex_ante} and adapt their proof for BM to our context of generalized mechanism. We first give some basic definitions and lemmas.

\begin{definition}
For student $s$, let $p$ be its preference list where college $c$ is ranked $r$-th. Consider a new preference list $p'$ of $s$ which exchanges the $r$-th choice with the $j$-th ($j<r$) choice college. A mechanism is called \emph{rank monotonic} if the probability of $s$ being matched to $c$ is weakly higher under $p'$ than under $p$ no matter how preferences and reciprocating factors are distributed.
\end{definition}

\begin{lemma}\label{lemma:rank_mono}
The generalized mechanism is \emph{rank monotonic}.
\end{lemma}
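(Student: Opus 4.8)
The plan is to reduce the statement to a \emph{realization-wise} deterministic implication and only then pass to probabilities. I would fix every primitive other than $s$'s report: the other students' submitted lists, every $\alpha_c$ and $h_c$, and all tie-breaking lotteries. Let $p$ rank $c$ at position $r$, and let $p'$ be obtained by swapping the $j$-th and $r$-th entries, so $p'$ ranks $c$ at $j<r$ and pushes the former $j$-th college $c'$ down to $r$. The first step is to record exactly what the swap does to the induced reciprocating preferences: since $h_c$ is strictly decreasing, $s$'s merit score at $c$ rises (the $h_c(r)$ term becomes $h_c(j)$), so $s$ moves weakly up in $c$'s reciprocating list; symmetrically $s$'s merit at $c'$ falls, so $s$ moves weakly down in $c'$'s list; and at every other college $s$'s merit, hence position, is unchanged. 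The goal then becomes the implication: if the generalized mechanism matches $s$ to $c$ under $p$, it does so under $p'$. Taking probabilities over the fixed primitives yields the weak inequality.

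The argument would rest on three facts about the student-proposing DA run in the matching phase of Fig.~\ref{mechanism_ext} (cf. Property~\ref{theo:R_stable}). First, DA returns the student-optimal stable matching for whatever reciprocating preferences it is fed. Second, the \emph{achievable-set} ingredient behind strategy-proofness \citep{Dubins,Roth82}: holding the college preferences and the other students fixed, there is a set $\mathcal A$ of colleges, \emph{independent of $s$'s own report}, such that DA matches $s$ to the college $s$ ranks highest among $\mathcal A$. Third, a \emph{consultation lemma}: a college's ranking of $s$ influences a DA run only at the moment $s$ proposes to that college; hence in any run in which $s$ never proposes to a given college, that college's ranking of $s$ may be altered without changing the run. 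I would prove the consultation lemma by running the two markets in lockstep and inducting on proposals, observing that the runs stay identical as long as the altered comparison is never triggered, and that the trigger never fires precisely because $s$ never proposes there.

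With these tools I would interpolate between $p$ and $p'$ through two auxiliary (not necessarily ``consistent'') profiles, changing one thing at a time: $p\to p_1$ swaps only $s$'s own list with college priorities held at their $p$-values; $p_1\to p_2$ additionally raises $s$ at $c$; and $p_2\to p'$ additionally lowers $s$ at $c'$. For $p\to p_1$, the set $\mathcal A$ is unchanged because it does not depend on $s$'s report, and since the colleges $s$ ranks above $c$ under $p'$ form a subset of those it ranked above $c$ under $p$ (all outside $\mathcal A$, as $c$ was $s$'s top element of $\mathcal A$), $c$ remains $s$'s top element of $\mathcal A$, so $s$ still gets $c$. For $p_1\to p_2$, raising $s$ at its own match keeps $c$ achievable, while the consultation lemma applied to the probe report ``$s$ points only at $d$'' shows no college $d$ that $s$ prefers to $c$ can enter $\mathcal A$ (in that probe $s$ never proposes to $c$, so raising $s$ at $c$ is invisible); hence $s$ again gets $c$. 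Finally, for $p_2\to p'$, since $s$ is matched to $c$ at rank $j$ it proposes only to ranks $1,\dots,j$ and never to $c'$ at rank $r$, so by the consultation lemma lowering $s$ at $c'$ leaves the outcome $c$ intact. Chaining the three steps gives $\mu^{p}(s)=c \Rightarrow \mu^{p'}(s)=c$.

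The crux, and the only genuinely new difficulty relative to the classical fixed-priority setting, is that $s$'s report \emph{endogenously} alters the college preferences at both $c$ and $c'$, so the standard ``fix everyone else, vary only $s$'' reasoning does not apply verbatim. The whole weight of the proof lies in neutralizing these two induced changes, and the lever is the consultation lemma combined with the fact that under $p'$ the worsened college $c'$ now sits \emph{below} $c$ in $s$'s order: an improvement at $s$'s eventual match cannot pull $s$ up to a strictly better college, and a worsening at a college $s$ never reaches is inert. I would also flag the one bookkeeping subtlety, ties in merit scores, which the stipulated score-then-lottery rule resolves as part of the fixed realization, so neither the deterministic implication nor the resulting probability inequality is affected.
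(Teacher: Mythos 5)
Your proof is correct, but it takes a genuinely different route from the paper's. The paper shares your opening move---fix the whole realization (scores, others' lists, reciprocating factors, tie-breaks) and prove the deterministic implication $\mu^{p}(s)=c\Rightarrow\mu^{p'}(s)=c$---but then argues directly on the two DA runs: under $p'$ the student is still rejected by the unchanged colleges in positions $1,\dots,j-1$, then proposes to $c$ with a weakly promoted merit position, and since ``preferences of all other students and colleges remain the same,'' at most $q-1$ students outrank $s$ in $c$'s new list, so $s$ keeps the seat. That direct comparison is short and intuitive, but it glosses over exactly the difficulty you single out as the crux: the rejection cascade under $p'$ need not coincide with the one under $p$, so the set of above-$s$ students who ever apply to $c$ can differ across the two runs, and the ``at most $q-1$'' count is asserted rather than derived; likewise the claim that $s$ is again rejected by the first $j-1$ colleges is stated without a coupling argument. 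Your interpolation through the two inconsistent profiles supplies precisely this missing rigor: the menu/option-set lemma (a clean consequence of fixed-priority strategy-proofness of student-proposing DA, \citealp{Dubins,Roth82}) handles the pure list swap, the consultation lemma renders the demotion at the never-proposed-to $c'$ inert, and the probe reports show no college $s$ ranks above $c$ can enter the menu. It also delivers the exact event implication (matched to $c$, not merely weakly better off) that the paper's definition of rank monotonicity requires, a point the paper's proof leaves implicit. The one step you assert rather than prove---``raising $s$ at its own match keeps $c$ achievable''---is the respecting-improvements property of student-proposing DA, and it admits a one-line patch inside your framework: after the improvement, the old matching remains stable in the probe market where $s$ lists only $c$ (the only comparisons affected involve $s$ at $c$, where $s$ is already held), so student-optimality of DA implies the probe still returns $c$, i.e., $c$ stays in the menu. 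With that filled in, your chain is airtight, at the cost of considerably more machinery than the paper's heuristic two-run comparison.
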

\begin{proof}
Consider the notation in the previous definition. Also consider any state of the matching system (i.e., any possible scores, submitted preferences of students and RF of colleges) where student $s$ submits $p$ and is matched to college $c$. In this state, student $s$ is rejected by every college prior to its $r$-th choice in $p$, and after $c$ enrolls $s$, no other students can drive $s$ out of $c$'s quota in later rounds. Mathematically, suppose $c$ has quota $q$ and $s$ is in the $i$-th position in $c$'s merit order list, then there will be at most $q-1$ students with higher position than $s$ in $c$'s list.

Hence, if student $s$ had instead submitted $p'$, in the same state of the system, he also would have been rejected by every college prior to its $j$-th choice in $p'$. Considering $c$'s merit order list, student $s$ can only get promoted since it lists $c$ as higher choice in $p'$. After getting into $c$ in the $j$-th round, since preferences of all other students and colleges remains the same as under $p$, there will also be at most $q-1$ students with higher position than $s$ in $c$'s new list. So $s$ can guarantee its admission to $c$. This implies that the probability of student $s$ being matched to college $c$ is weakly larger when it submits $p'$ instead of $p$.  \qed
\end{proof}

\begin{definition}
If the probability of student $s$ being matched to its $i$-th choice college is independent of its submitted preference list, we say that its preference revelation problem exhibits \emph{college anonymity}.
\end{definition}

\begin{lemma}\label{lemma:college_anonym}
Suppose there are $m$ colleges with quota $q$ and $n$ students. If the submitted preferences of all students other than $s$, as well as the reciprocating factors of colleges are uniformly distributed, then the preference revelation problem of student $s$ exhibits \emph{college anonymity}.
\end{lemma}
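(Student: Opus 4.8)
The plan is to establish \emph{college anonymity} through a symmetry (relabeling) argument over college names, mirroring the approach used for the Boston mechanism in \cite{Ex_ante}. First I would fix the submitted list $p_s$ of student $s$ and regard the remainder of the environment --- the submitted preferences $\{p_t\}_{t\neq s}$ of the other students together with the reciprocating factors $\{\alpha_c\}_{c\in\mathcal C}$ (and, if they are allowed to vary, the bonus schedules $\{h_c\}$) --- as a random configuration $\omega$ drawn from the product measure $\mu$. The generalized mechanism is then a deterministic map $M(p_s,\omega)$ returning the matching, and I write $M(p_s,\omega)(s)$ for the college (possibly the dummy $c_0$) assigned to $s$.

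The crux is an \emph{equivariance} property: for any permutation $\sigma$ of the colleges, relabeling every college name by $\sigma$ simultaneously in $p_s$, in every other student's list, and in all college-indexed data ($\alpha_c$, $h_c$, quotas) transforms the output by the same relabeling, i.e. $M(\sigma p_s,\,\Phi_\sigma\omega)(s)=\sigma\bigl(M(p_s,\omega)(s)\bigr)$, where $\Phi_\sigma$ denotes the relabeling applied to $\omega$. This holds because the merit score $mrt_c(s)=(1-\alpha_c)f_s+\alpha_c h_c(r)$ is computed from data that all relabel consistently, so each college's reciprocating preference list is merely renamed; and the DA algorithm, depending only on the structure of the renamed lists, produces the renamed matching. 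I would verify this by induction on the rounds of DA, checking that proposals, tentative acceptances, and rejections are all preserved under the consistent renaming.

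Next I would observe that, since the other students' preferences are i.i.d.\ uniform over all rank orderings and the reciprocating factors are i.i.d.\ (and all quotas equal $q$), the distribution $\mu$ is invariant under $\Phi_\sigma$; hence $\Phi_\sigma$ is measure-preserving. With this in hand the result follows by a change of variables. Given two lists $p$ and $p'$ of $s$, let $\sigma$ be the permutation sending the $i$-th ranked college of $p$ to the $i$-th ranked college of $p'$ for every $i$, so that $\sigma p=p'$ and $\sigma$ carries $s$'s $i$-th choice under $p$ to its $i$-th choice under $p'$. Substituting $\omega=\Phi_\sigma\eta$ and invoking equivariance gives
\begin{equation*}
\Pr\bigl[M(p',\omega)(s)=c'_{(i)}\bigr]
=\Pr\bigl[\sigma\bigl(M(p,\eta)(s)\bigr)=\sigma\bigl(c_{(i)}\bigr)\bigr]
=\Pr\bigl[M(p,\eta)(s)=c_{(i)}\bigr],
\end{equation*}
where $c_{(i)}$ and $c'_{(i)}=\sigma(c_{(i)})$ are the $i$-th choices under $p$ and $p'$. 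Since $p,p'$ are arbitrary, the probability that $s$ is matched to its $i$-th choice does not depend on the submitted list, which is exactly college anonymity.

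I expect the main obstacle to be a careful proof of the equivariance property --- in particular making precise that the relabeling must act on \emph{all} college-indexed quantities (the $\alpha_c$, the bonus schedules $h_c$, and the quotas) for the symmetry to be exact, and confirming that the tie-breaking rule (by $f_s$, then lottery) is itself label-independent so that it does not silently break the symmetry.
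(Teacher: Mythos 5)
Your proposal is correct and is essentially the same argument as the paper's: the paper carries out exactly your relabeling scheme, defining the induced permutation $f$ with $f(p_s)=p_s'$, pairing each configuration $(p_{-s},\vec{\alpha})$ with its permuted image, and using uniformity of preferences ($Pr(p_{-s})=Pr(f(p_{-s}))$) together with a change of variables in the integral over the regions $\mathcal{R}^m_{p_s,p_{-s},i}\subseteq[a,b]^m$ to conclude $prob_i=prob_i'$. Your abstract equivariance-plus-measure-preservation phrasing is just a cleaner packaging of that same symmetry, and your caveat that the relabeling must also act on the $h_c$ and quotas (the latter handled by the lemma's equal-quota hypothesis) makes explicit a point the paper leaves implicit.
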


\begin{proof}
Let student $s$ submit preference list $p_s$ where $p_s(i)$ is its $i$-th choice college in list $p_s$. Denoted by $p_{-s}$ other student's preferences and $\vec{\alpha}=(\alpha_1,\alpha_2,\ldots,\alpha_m)$ the vector of each college's reciprocating factor. Let $\mathcal{P}$ be the set of all possible $p_{-s}$. And $\mathcal{R}^m=[a,b]^m$ is the $m$-dimensional space containing all possible values of $\vec{\alpha}$.

For each particular $p_{-s}\in\mathcal{P}$, denoted by $\mathcal{R}_{p_s,p_{-s},i}^m(\alpha_1,\alpha_2,\ldots,\alpha_m) \subseteq \mathcal{R}^m$ the region of $\vec{\alpha}$ in which $s$ is matched to its $i$-th choice under $(p_s,p_{-s})$. Then the probability of $s$ being matched to its $i$-th choice can be denoted as:
\begin{equation}\label{eqn:prob_i}
    prob_i=\sum_{p_{-s}\in\mathcal{P}} Pr(p_{-s})\int_{\mathcal{R}_{p_s,p_{-s},i}^m}\frac{1}{(b-a)^m}d\vec{\alpha}
\end{equation}

Now let student $s$ submit a different preference list $p_s'$. This induces a permutation mapping $f$ from the set of colleges' index $\{1,2,\ldots,m\}$ to itself which is defined as follows: for any college $c_j$ $(j\in\{1,2,\ldots,m\}$, let $r_{c_j,p_s}$ be its ranking in list $p_s$, then the image value $f(j)$ is the index of college which is ranked $r_{c_j,p_s}$ in list $p_s'$. By a slight abuse of notation, we express this permutation as $f(p_s)=p_s'$ for conciseness.

For each $(p_{-s},\mathcal{R}_{p_s,p_{-s},i}^m(\alpha_1,\alpha_2,\ldots,\alpha_m))$, by symmetry, we know that under \\ $(f(p_{-s}),\mathcal{R}_{p_s,p_{-s},i}^m(\alpha_{f(1)},\alpha_{f(2)},\ldots,\alpha_{f(m)}))$, student $s$ would still be matched to its $i$-th choice college in $p_s'$. Since $f$ is a one-to-one mapping, by this way of transformation we would have listed all the possibility for the matching. Thus the probability of $s$ being matched to its $i$-th choice in $p_s'$ can be denoted as:
\begin{equation}\label{eqn:prob_i_prime}
  prob_i' = \sum_{p_{-s}\in\mathcal{P}} Pr(f(p_{-s}))\int_{\mathcal{R}_{p_s,p_{-s},i}^m(f(\vec{\alpha}))}\frac{d(f(\vec{\alpha}))}{(b-a)^m}
\end{equation}
Since preference list is uniformly drawn, we have $Pr(p_{-s})=Pr(f(p_{-s}))$. The integral parts in equation (\ref{eqn:prob_i}) and (\ref{eqn:prob_i_prime}) are equivalent since it is just a simple substitution of the variables, i.e., they both equal to the following expression:
$$\int_{\mathcal{R}_{p_s,p_{-s},i}^m(x_1,x_2,\ldots,x_m)}\frac{1}{(b-a)^m}dx_1dx_2\ldots dx_m$$

Thus, we have $prob_i=prob_i'$ and the lemma is proven.  \qed
\end{proof}

We still use $prob_i$ to denote the probability of $s$ being matched to its $i$-th choice college. Combining the conclusion in Lemma \ref{lemma:rank_mono} and Lemma \ref{lemma:college_anonym}, we have that $prob_i\geq prob_j$ for $i<j$. The best response in this case is clearly to put the favorite school in the first place, the second most favorite one in the second place and so on. Thus truth-telling would be the best response for students.

\section{Proof of Property \ref{theo:dominated}}\label{proof:dominated}
Suppose there are $m$ colleges and $n$ students. Without loss of generality, we assume that all students share the same preference list: $c_1>c_2>\cdots>c_m$. Let the quota of their favorite college $c_1$ be $q$ and reciprocating factor of $c_1$ is $\alpha_{c_1}$.

By listing $c_1$ as their first choice, the top $q$ students judged by score (denoted by $s_1,s_2,\ldots,s_q$ with $f_{s_1}> f_{s_2}> \ldots> f_{s_q}$) would still occupy the top $q$ vacancies in $c_1$'s merit score list. This can be easily seen by the composition of merit score:
\begin{equation*}
    mrt_{c_1}(s_q)=(1-\alpha_{c_1})\cdot f_{s_q}+\alpha_{c_1} \cdot h_{c_1}(1)
\end{equation*}
For any other student $s_j$ ($q<j\leq n$) listing $c_1$ as $i$-th choice, its merit score would be smaller than $mrt_{c_1}(s_q)$ since $f_{s_j}<f_{s_q}$ and $h_{c_1}(i)\leq h_{c_1}(1)$.

Thus, the top $q$ would reveal their true preferences and get into their first choice college.

For the rest students, since all the quota of $c_1$ is filled up, they would have no incentive to still list $c_1$ as their first choice. To prove that truth-telling (denoted by $\mathbb{T}$) is a dominated strategy, we only need to find one particular strategy which dominates $T$. We now focus on one of such strategies which requires students to always submit $c_2>c_3>\cdots>c_m>c_1$ (denoted by $\mathbb{S}$). Let $s$ be any of these remaining students and the utility of $s$ be $u_i$ when it is matched with $c_i$ ($i\in\{1,2,\ldots,m\}$). Denoted by $prob_i$ the probability of $s$ being matched with $c_i$ under $\mathbb{T}$ and $prob_i'$ the same probability under $\mathbb{S}$ (notice that $prob_1$ and $prob_1'$ would be zero since $c_1$'s quota has already been occupied). Then the expected utility under $\mathbb{T}$ would be no more than that under $\mathbb{S}$ since:
\begin{eqnarray*}
    E(U)^T&=&\sum_{i=1}^m prob_i \cdot u_i \\
    &=& \sum_{i=2}^m prob_i \cdot u_i \\
    &\leq& \sum_{i=2}^m prob_i' \cdot u_i \\
    &=& E(U)^S
\end{eqnarray*}
According to Lemma \ref{lemma:rank_mono}, the probability of $s$ being matched with $c_i$ ($i\in\{2,3,\ldots,m\}$) would be higher under $\mathbb{S}$ since $c_i$'s position gets promoted compared with $\mathbb{T}$. Thus $prob_i \leq prob_i'$ and the inequality above holds. This gives the proof that truth-telling is a dominated strategy.

\section{Proof of Property \ref{theo:college_truth}}\label{proof:college_truth}
For analyzing colleges' strategies, we borrow the idea of \emph{dropping strategies} and \emph{rejection chains algorithm} from \cite{Incentive} and show that it is impossible for colleges to find an effective dropping strategy and manipulate the matching successfully. Thus the best matching colleges may achieve is via revealing their true preferences (i.e., their merit order lists).

To complete our proof, we first briefly restate some basic definitions and lemmas in \cite{Incentive}. Let $p_c$ be the reciprocating preference of college $c$ obtained through its true $\alpha_c$ and $h_c(r)$. $s_1$ and $s_2$ are two students applying to college $c$.

\begin{definition}\label{defn:dropping}
A report $p_c'$ is said to be a \emph{dropping strategy} if (i) $p_c: s_1>s_2$ and $p_c':s_1>\varnothing$ imply $p_c':s_1>s_2$, and (ii) $p_c: \varnothing>s_1$ implies $p_c': \varnothing>s_1$.
\end{definition}
In other words, a dropping strategy of a college is obtained by removing some students from its true lists of acceptable students, which never changes its relative preference of any two students.

\begin{lemma}\label{lemma:dropping}
\emph{(Dropping strategies are exhaustive)}: If under certain strategy $p_c'$ the mechanism produces matching outcome $\mu$, then there must exist a dropping strategy of college $c$ that produces a matching that $c$ weakly prefers to $\mu$ according to its true preference $p_c$.
\end{lemma}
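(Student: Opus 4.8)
The plan is to exhibit one explicit dropping strategy and show it reproduces the outcome of $\mu$ for college $c$ exactly, so that the ``weakly prefers'' conclusion holds with equality. Write $p_{-c}$ for the reports of all students and all other colleges, so that $\mu=\phi(p_c',p_{-c})$, where $\phi$ denotes the stable mechanism in force (here the student-proposing DA). Define the candidate dropping strategy $\tilde p_c$ to be the report obtained from the true list $p_c$ by declaring acceptable exactly the students in $\mu(c)$, keeping their true relative order; by Definition \ref{defn:dropping} this is indeed a dropping strategy. Since the merit-score preferences are responsive, I may freely use the standard blocking-pair notion and the Rural Hospitals Theorem throughout.

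First I would show that $\mu$ remains stable under the modified profile $(\tilde p_c,p_{-c})$. Individual rationality is immediate: the only partners $\mu$ assigns to $c$ are the students in $\mu(c)$, which are precisely the ones $\tilde p_c$ declares acceptable, and no other agent's list has changed. For a potential blocking pair $(s,c')$ I split on $c'$. If $c'\neq c$, then both $s$ and $c'$ have unchanged preferences, so any pair blocking $\mu$ under $(\tilde p_c,p_{-c})$ would already block it under $(p_c',p_{-c})$, contradicting the stability of $\mu$ there. If $c'=c$, a block requires $c$ to prefer $s$ to some member of $\mu(c)$ or to seat $s$ in a vacancy; but under $\tilde p_c$ the only acceptable students are those in $\mu(c)$, each already matched to $c$, so the only candidate lies in $\mu(c)$, and then $\mu(s)=c$ rules out $c\succ_s\mu(s)$. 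Hence no blocking pair exists and $\mu$ is stable under $(\tilde p_c,p_{-c})$.

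Next, let $\nu=\phi(\tilde p_c,p_{-c})$. Because $\phi$ is a stable mechanism, $\nu$ is stable under $(\tilde p_c,p_{-c})$, and because $\tilde p_c$ makes only $\mu(c)$ acceptable to $c$ we have $\nu(c)\subseteq\mu(c)$. Now I would apply the Rural Hospitals Theorem to the two stable matchings $\mu$ and $\nu$ of the single profile $(\tilde p_c,p_{-c})$: it guarantees that $c$ fills the same number of seats in both, so $|\nu(c)|=|\mu(c)|$, and together with the inclusion this forces $\nu(c)=\mu(c)$. Thus the dropping strategy $\tilde p_c$ delivers to $c$ exactly the set $\mu(c)$, which $c$ weakly prefers to $\mu(c)$ under its true preference $p_c$, establishing the lemma.

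The step I expect to be the main obstacle is the stability-preservation argument of the second paragraph, in particular handling the quota correctly in the $c'=c$ case (a vacancy at $c$ could a priori invite a block, but every student acceptable under $\tilde p_c$ is already seated at $c$), together with a clean invocation of the Rural Hospitals Theorem, whose hypotheses hold here only because the merit-score preferences are responsive. If one preferred to avoid that theorem, an alternative is to lean on the lattice and doctor-optimality structure of the student-proposing mechanism; but the Rural Hospitals route keeps the argument valid for any stable mechanism and is the cleaner of the two.
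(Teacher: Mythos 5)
Your proof is correct in the setting of this paper, but it takes a genuinely different route from the proof the paper relies on: the paper does not prove Lemma \ref{lemma:dropping} itself, it incorporates the argument of Appendix B of \cite{Incentive} by reference. The Kojima--Pathak argument constructs essentially the same candidate report (rank the members of $\mu(c)$ that are \emph{truly acceptable} under $p_c$, in true order) but then establishes only weak preference, via a contradiction: if the dropping strategy produced a strictly worse outcome for $c$, responsiveness would yield a truly acceptable student in $\mu(c)$ left out of the new match while $c$ has a vacant seat, and one shows this student together with $c$ would block the new outcome, contradicting stability of DA under the dropping profile. Your route is different and, where it applies, stronger: you show $\mu$ itself remains stable under $(\tilde p_c,p_{-c})$ and then invoke the Rural Hospitals Theorem on the two stable matchings $\mu$ and $\nu=\phi(\tilde p_c,p_{-c})$ to force $|\nu(c)|=|\mu(c)|$, which with $\nu(c)\subseteq\mu(c)$ gives outcome-equivalence $\nu(c)=\mu(c)$ rather than mere weak preference. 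The stability-preservation step and the RHT invocation are both sound: the blocking conditions for pairs $(s,c')$ with $c'\neq c$ involve only unchanged preferences, the $c'=c$ case is ruled out exactly as you say because every $\tilde p_c$-acceptable student is already seated at $c$, and RHT applies since both matchings are stable with respect to the single profile $(\tilde p_c,p_{-c})$ and the merit-score preferences are responsive.

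One caveat you should make explicit, because it is the one place your construction could fail under the paper's own Definition \ref{defn:dropping}: clause (ii) forbids a dropping strategy from declaring acceptable any student who is unacceptable under the true $p_c$. If the arbitrary report $p_c'$ caused $\mu(c)$ to contain a student unacceptable under $p_c$, then your $\tilde p_c$ (which declares \emph{all} of $\mu(c)$ acceptable) would not be a dropping strategy, and the repaired version that lists only the truly acceptable members of $\mu(c)$ destroys your key step --- $\mu$ is no longer individually rational for $c$ under the modified profile, so the equality argument collapses and one is pushed back to the Kojima--Pathak blocking-pair argument. In the generalized mechanism of this paper the caveat is harmless, since every student receives a merit score and colleges' true reciprocating preferences are complete rankings with no unacceptable students; but that observation is what licenses the sentence ``by Definition \ref{defn:dropping} this is indeed a dropping strategy,'' and it should be stated rather than left implicit.
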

The detailed proof can be found in Appendix B of \cite{Incentive}. Lemma \ref{lemma:dropping} implies that if there exists any successful strategy $p_c'$ for college $c$ (by manipulating its $\alpha_c$ and $h_c(r)$), we can always find a dropping strategy which achieves at least the same improvement. That is to say, if we can prove that for every dropping strategy, we cannot make college $c$ better off, then $c$ would be forced to act truthfully.

For any dropping strategies, the ``dropped'' students in $c$ would re-apply to other colleges in the student-proposing DA algorithm, which would cause a chain of rejection and acceptance in the subsequent stages. This can be analyzed in details through the so-called \emph{rejection chains} algorithm as follows.

\noindent\textbf{Algorithm 1.} \textsc{Rejection Chains}

\noindent Let $\mu$ be the outcome of student-proposing DA algorithm and $B_c^1$ a subset of $\mu(c)$. Let $c$ reject all the applicants in $B_c^1$. Initially, we set $i=0$ and $flag=false$.

\noindent\textbf{BEGIN}

\noindent Let $i:=i+1$;

\begin{enumerate}
  \item If $B_c^i=\varnothing$, \textbf{return};
  \item Otherwise, let $s$ be the least preferred student by $c$ among $B_c^i$, and let $B_c^{i+1}:=B_c^i\backslash s$.
  \item Iterate the following steps.
  \begin{enumerate}
    \item $s$ continues to apply:
    \begin{enumerate}
      \item If $s$ has already applied to every college in $p_s$, \textbf{GO TO} BEGIN;
      \item Otherwise, let $c'$ be the most preferred college of $s$ among those which $s$ has not yet applied. If $c'=c$, set $flag=ture$ then \textbf{return};
    \end{enumerate}
    \item Acceptance and/or rejection:
    \begin{enumerate}
      \item If $c'$ has no vacant position and prefers each of its current mates to $s$, then $c'$ rejects $s$, \textbf{GO TO} Step 3.
      \item Otherwise, $c'$ accepts $s$. If $c'$ has a vacant position, \textbf{GO TO} BEGIN; Otherwise, $c'$ rejects the least preferred student among those who were matched to $c'$. Let this rejected student be $s$ then \textbf{GO TO} Step 3;
    \end{enumerate}
  \end{enumerate}
\end{enumerate}

\noindent\textbf{END}

When Algorithm 1 returns, $flag$ could be either $false$ or $true$. We say that Algorithm 1 \textbf{returns to $c$} if it returns with $flag=true$.

We then re-state Lemma 3 in \cite{Incentive} as follows:
\begin{lemma}\label{lemma:reject_chain}
For any $c\in\mathcal{C}$, if Algorithm 1 does not return to c for any non-empty $B_c^1\subseteq \mu(c)$, then $c$ cannot profitably manipulate by a dropping strategy.
\end{lemma}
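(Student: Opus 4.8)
The plan is to prove the statement directly: assuming Algorithm 1 never returns to $c$ for any non-empty $B_c^1\subseteq\mu(c)$, I would show that every dropping strategy leaves $c$ weakly worse off than under $\mu$, so none can be profitable. By Lemma \ref{lemma:dropping} it suffices to consider dropping strategies only. First I would reduce the relevant dropping strategies to rejections of subsets of $\mu(c)$: under the true-preference outcome $\mu$, the students actually matched to $c$ are exactly $\mu(c)$, while any student dropped from outside $\mu(c)$ was already rejected by $c$ and its removal changes nothing at $c$. Hence, as far as $c$'s own outcome is concerned, any dropping strategy acts like $c$ rejecting some subset $B=B_c^1\subseteq\mu(c)$ of its current mates (the case $B=\varnothing$ being the non-manipulation case).

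Next I would invoke the order-independence of the student-proposing DA outcome: the matching produced when $c$ rejects $B$ can be computed by \emph{resuming} DA from $\mu$ and letting the rejected students re-propose down their lists. The rejection chains algorithm, started with this $B_c^1=B$, is exactly a bookkeeping of that re-stabilization — each dropped student cascades through the colleges it has not yet applied to, possibly displacing others, who cascade in turn. The crucial structural fact I would use is that in student-proposing DA a currently held student never proposes ``upward,'' so a vacancy opening at a more preferred college does not lure a retained student of $c$ away. Consequently the \emph{only} way $c$'s matched set can change favorably is for some student to propose to $c$ during the cascade; this is precisely the event $flag=true$ that Algorithm 1 reports as ``returns to $c$.''

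Under the hypothesis this never occurs for any non-empty $B$, so $c$ receives no new proposal: every student in $\mu(c)\setminus B$ remains held by $c$ and no replacement arrives, giving $c$ the match $\mu(c)\setminus B$. Since $c$'s true preference is generated by merit scores and is therefore responsive, being matched to the strict subset $\mu(c)\setminus B$ of acceptable students is weakly (and for $B\neq\varnothing$ strictly) worse than being matched to $\mu(c)$. Thus no dropping strategy benefits $c$, and combined with Lemma \ref{lemma:dropping} no manipulation of $\alpha_c$ and $h_c$ can benefit $c$ either, which is the claim.

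I expect the main obstacle to be rigorously justifying that Algorithm 1 faithfully traces the resumed DA — in particular establishing the ``retained students do not leave'' claim and verifying that the single-threaded cascade (processing the least preferred dropped student first, one at a time) terminates at the same stable matching as re-running DA from scratch. This is where order-independence of DA must be combined with the monotone structure of rejection chains, and I would lean on the corresponding development in \cite{Incentive} to supply these bookkeeping lemmas rather than reproving them.
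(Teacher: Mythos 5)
You should first note what you are being compared against: the paper never proves this lemma itself --- it is restated verbatim as Lemma~3 of \cite{Incentive}, so the benchmark is Kojima and Pathak's appendix proof. Your overall architecture (view the manipulated run as DA resumed from $\mu$, identify a student proposing to $c$ during the cascade --- the event $flag=true$ --- as the only channel through which $c$ can gain, and conclude that absent any return $c$ ends with $\mu(c)\setminus B$, which is strictly worse under responsive merit-score preferences) is faithful to the spirit of that source, and the final step is fine. However, your very first reduction contains a genuine gap.

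The claim that a student dropped from outside $\mu(c)$ ``was already rejected by $c$ and its removal changes nothing at $c$'' is false. Counterexample: two colleges $c,d$ with quota one; students $s,s',s''$ with preferences $s'\colon c\succ d$, $s''\colon d\succ c$, and $s$ listing only $c$; college preferences $c\colon s''\succ s\succ s'$ and $d\colon s'\succ s''$. Under truthful DA, $s$'s application to $c$ displaces $s'$, who displaces $s''$ at $d$, who in turn displaces $s$ at $c$; the outcome is $\mu(c)=\{s''\}$, $\mu(d)=\{s'\}$, $s$ unmatched, so $s\notin\mu(c)$. If $c$ drops only $s$, this chain never starts and the outcome becomes $c$--$s'$, $d$--$s''$: the match of $c$ itself changes (and strictly worsens) even though the dropped student was never matched to $c$. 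The subtlety is that a student eventually rejected by $c$ may be \emph{temporarily held} along the way, and that temporary tenancy can mediate a rejection chain that delivers $c$ its good match; dropping the student forfeits the chain. Consequently your identity ``manipulated outcome $=$ $\mu$ resumed with seed $B=D\cap\mu(c)$'' fails whenever $D\not\subseteq\mu(c)$; in the example $D\cap\mu(c)=\varnothing$, so your argument would predict the outcome is exactly $\mu$, which it is not. The lemma itself survives because such side effects, absent a return to $c$, can only \emph{hurt} $c$ --- but proving that for arbitrary quotas and mixed drops is precisely the delicate content of Kojima and Pathak's proof, which analyzes arbitrary dropping strategies directly rather than reducing to seeds $B_c^1\subseteq\mu(c)$. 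Your deferral to \cite{Incentive} would therefore have to import this substantive step, not merely order-independence bookkeeping, so as written the proposal does not constitute a proof.
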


Therefore the proof can be now boiled down to answer whether Algorithm 1 can possibly return to $c$ after applying any dropping strategy. This is summarized in the following lemma as the final step of proof.
\begin{lemma}\label{lemma:no_return}
In the generalized mechanism, Algorithm 1 can never return to college $c$ with strictly better $\mu(c)$ if it applied any dropping strategies.
\end{lemma}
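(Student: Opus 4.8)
The plan is to argue by contradiction: suppose that for some dropping strategy and some nonempty $B_c^1\subseteq\mu(c)$, Algorithm 1 returns to $c$ and the induced matching is strictly better for $c$ when judged by its true reciprocating preference $p_c$. When the algorithm returns, it has traced a finite rejection chain. Starting from the least preferred dropped student $s_d$, it produces a sequence $s_d=x_0,x_1,\ldots,x_k$ in which each active student $x_i$ is accepted by some college $c_{i+1}$ and displaces $x_{i+1}$, until $x_k$ proposes back to $c$. Reading off the acceptance relations yields a cyclic system of merit inequalities, namely $mrt_{c_{i+1}}(x_i)>mrt_{c_{i+1}}(x_{i+1})$ for $0\le i<k$ together with the closing ``benefit'' inequality $mrt_c(x_k)>mrt_c(s_d)$.

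First I would substitute $mrt_{c}(s)=(1-\alpha_{c})f_s+\alpha_{c}h_{c}(r)$ into each inequality, separating the shared score term from the college-local bonus term. The feature I want to exploit is that the score $f_s$ is common to every college while the bonus $\alpha_c h_c(r)$ is local. Because a displaced student always re-enters the chain by proposing strictly \emph{down} its own preference list, I would try to show that at every bump the incoming student $x_i$ ranks $c_{i+1}$ no higher than the displaced $x_{i+1}$ does; then $x_i$'s bonus at $c_{i+1}$ is no larger than $x_{i+1}$'s, so $mrt_{c_{i+1}}(x_i)>mrt_{c_{i+1}}(x_{i+1})$ can hold only if $f_{x_i}>f_{x_{i+1}}$. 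Iterating gives $f_{s_d}=f_{x_0}>f_{x_1}>\cdots>f_{x_k}$, while the same rank argument applied to the closing step (where $x_k$ proposes to $c$ as a low choice, yet must out-score $s_d$) forces $f_{x_k}>f_{s_d}$. The two together give the contradiction $f_{s_d}>f_{s_d}$, establishing the lemma.

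The main obstacle is exactly the rank-comparison invariant invoked above: one must show that whenever $x_i$ bumps $x_{i+1}$ from $c_{i+1}$, the displaced $x_{i+1}$ ranks $c_{i+1}$ at least as high as $x_i$ does. This is the only place where the common-score structure does real work, and it is delicate, because a student pushed to a low-ranked college early in the chain may later be displaced by a newcomer who ranks that college \emph{higher}; this reverses the needed inequality and lets a score \emph{increase} along the chain, breaking the telescoping. I would attempt to secure the invariant from stability of $\mu$ together with the monotone, downward nature of proposals, tracking for each college the merit of its least preferred current holder and arguing it never drops below the merit an incoming chain-student can offer. If the invariant resists proof in full generality, the cleanest fallback is to restrict to the homogeneity already used elsewhere in the paper (a common bonus schedule $h_c$ and a controlled range of $\alpha_c$), under which the bonus gaps are uniformly bounded and the score inequalities telescope; pinning down the minimal such condition is, I expect, where most of the effort will lie.
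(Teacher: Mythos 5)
Your outline reproduces the paper's own argument almost step for step: assume Algorithm 1 returns to $c$ with a strictly better $\mu(c)$, reconstruct the final rejection cycle $c_1-s_1-c_2-s_2-\cdots-c_{n-1}-s_{n-1}-c_n$ (with $c_1=c_n=c$), split each merit comparison into the common exam score $f_s$ and the college-local bonus $\alpha_c h_c(r)$, extract a strict score inequality $f_{s_i}>f_{s_{i+1}}$ at every bump, close the cycle with $f_{s_{n-1}}>f_{s_1}$ from the supposed strict improvement at $c$, and sum the inequalities to reach the contradiction $0>0$. So the route you chose is exactly the paper's.

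The genuine gap is the one you name yourself and then leave open: you never establish the rank-comparison invariant (the incoming student $x_i$ ranks $c_{i+1}$ no higher than the student it displaces, hence carries a weakly smaller bonus there), and without it the telescoping collapses; your fallback of imposing a common bonus schedule $h_c$ and a controlled range of $\alpha_c$ proves a weaker statement, not the lemma as stated. The paper closes this step in a single sentence: in the reconstructed chain the displaced $s_{i+1}$ applied to $c_{i+1}$ \emph{earlier} than $s_i$, and earlier application is read as implying a weakly higher bonus; strict merit preference for $s_i$ with a weakly smaller bonus then forces $f_{s_i}>f_{s_{i+1}}$ when $\alpha_{c_{i+1}}<1$ (and via the $f_s$ tie-breaker when $\alpha_{c_{i+1}}=1$), and the same one-line comparison yields the closing inequality, since $s_1\in\mu(c)$ reached $c$ before $s_{n-1}$ does. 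Note, however, that the scenario you flag---a chain student arriving at a college near the top of its own list while the incumbent landed there deep down \emph{its} list---is precisely what this temporal assertion must rule out, and the paper rules it out by assertion rather than argument: the temporal order in which two \emph{different} students apply does not by itself order their list positions (it does under the synchronized reading of the original DA, where a round-$t$ proposal is a $t$-th choice, but incumbents created mid-chain require a separate check). So your diagnosis of the crux is accurate; to complete a proof along the paper's lines you must either prove the invariant for all incumbents, original and chain-created, or justify the ``applied earlier implies higher bonus'' claim that the paper takes for granted.
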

\begin{proof}
Assuming that Algorithm 1 returns to $c$ with strictly better $\mu(c)$ after finite steps of iterations. By retrieving the running process of Algorithm 1, we would be able to reconstruct a complete chain\footnote{To be exact, it may be called a rejection ``cycle'' since the starting point of the chain coincides with its ending point.} of rejection and acceptance as follows (Denote $c=c_1=c_n$):
\begin{equation*}
    c_1-s_1-c_2-s_2-\cdots-c_{n-1}-s_{n-1}-c_n
\end{equation*}
The above chain reflects the latest iteration before Algorithm 1 returns with $flag=true$. It starts from $c_1$ (i.e., $c$) rejecting $s_1$, then $s_1$ applied to $c_2$ following its own preference $p_{s_1}$; after comparing $s_1$ and $s_2$, $c_2$ decided to accept $s_1$ and reject $s_2$; then $s_2$ started its new application and so on. Finally, $s_{n-1}$ was rejected by $c_{n-1}$ and applied to $c_n$ (i.e., $c$), which satisfies the stopping condition of Algorithm 1 with $flag=true$.

By inspecting the behavior of $c_2$, we know that $s_2$ applied to it earlier than $s_1$ (which implies $s_2$ obtained higher bonus score than $s_1$ in $c_2$) but is still less favored by $c_2$. The only explanation for this phenomenon is that $s_1$ achieved strictly better exam score than $s_2$, i.e., $f_{s_1}>f_{s_2}$. This rule also applies to other colleges in this chain. In the end, since we suppose $c$ achieves better matching through the dropping strategy, $s_{n-1}$ is regarded better than $s_1$ for $c$, which implies that $f_{s_{n-1}}>f_{s_1}$. In summary, we have:
\begin{eqnarray*}
  f_{s_1} &>& f_{s_2} \\
  f_{s_2} &>& f_{s_3} \\
   &\cdots&  \\
  f_{s_{n-2}} &>& f_{s_{n-1}} \\
  f_{s_{n-1}} &>& f_{s_1}
\end{eqnarray*}
By adding the left side and right side of those inequations, we derive $0>0$, an apparent contradiction!

Thus our initial assumption cannot be true and colleges may not manipulate their preference successfully.  \qed
\end{proof}



\begin{thebibliography}{99}

\bibitem[\protect\citeauthoryear{Abdulkadiro\u{g}lu et al.}{2008}]{Expanding}
Abdulkadiro\u{g}lu, A., Che, Y., Yasuda, Y., 2008. Expanding ``Choice'' in School Choice. SSRN:1308730.

\bibitem[\protect\citeauthoryear{Abdulkadiro\u{g}lu et al.}{2009}]{Boston_reconsi}
Abdulkadiro\u{g}lu, A., Che, Y., Yasuda, Y., 2009. Resolving Conflicting Preferences in School Choice:
 the ``Boston'' Mechanism Reconsidered. SSRN:1456088.

\bibitem[\protect\citeauthoryear{Abdulkadiro\u{g}lu et al.}{2005}]{Boston}
Abdulkadiro\u{g}lu, A., Pathak, P., Roth, A., S\"{o}nmez, T., 2005. The Boston Public School Match.
American Economic Review. 95(2), 368 - 371.

\bibitem[\protect\citeauthoryear{Abdulkadiro\u{g}lu and S\"{o}nmez}{2003}]{School_choice}
Abdulkadiro\u{g}lu, A., S\"{o}nmez, T., 2003. School Choice: A Mechanism Design Approach. American Economic
Review. 93, 729 - 747.

\bibitem[\protect\citeauthoryear{Arcaute and Vassilvitskii}{2009}]{JobMarket}
Arcaute, E., Vassilvitskii, S., 2009. Social Networks and Stable Matchings in the Job Market. In: Proceedings of the 5rd International Workshop on Internet and Network Economics. 220 - 231.

\bibitem[\protect\citeauthoryear{Chakraborty et al.}{2010}]{Interdependent}
Chakraborty, A., Citanna, A., Ostrovsky, M., 2010. Two-sided Matching with Interdependent Values. Journal
of Economic Theory. 145, 85 - 105.

\bibitem[\protect\citeauthoryear{Chen and S\"{o}nmez}{2006}]{Chen}
Chen, Y., S\"{o}nmez, T., 2006. School Choice: An Experimental Study. Journal of Economic Theory. 127,
 202 - 231.

\bibitem[\protect\citeauthoryear{Chiu and Weng}{2009}]{ChiuWeng}
Chiu, Y., Weng, W., 2009. Endogenous Preferential Treatment in Centralized Admissions. RAND Journal of Economics. 40(2), 258 - 282.


\bibitem[\protect\citeauthoryear{Dubins and Freedman}{1981}]{Dubins}
Dubins, L., Freedman, D., 1981. Machiavelli and the Gale-Shapley Algorithm. American Mathematical Monthly.
88(7), 485 - 494.

\bibitem[\protect\citeauthoryear{D\"{u}tting et al.}{2010}]{SponSearch}
D\"{u}tting, P., Henzinger, M., Weber, I., 2010. Sponsored Search, Market Equilibria, and the Hungarian
Method. In: Proceedings of the 27th International Symposium on Theoretical Aspects of Computer Science. 287 - 298.

\bibitem[\protect\citeauthoryear{Erdil and Ergin}{2008}]{Tie-breaking}
Erdil, A., Ergin, H., 2008. What's the Matter with Tie-Breaking? Improving Efficiency in School
Choice. American Economic Review. 98(3), 669 - 689.

\bibitem[\protect\citeauthoryear{Ergin and S\"{o}nmez}{2006}]{Games}
Ergin, H., S\"{o}nmez, T., 2006. Games of School Choice Under the Boston Mechanism. Journal of Public
Economics. 90, 215 - 237.

\bibitem[\protect\citeauthoryear{Featherstone and Niederle}{2008}]{Ex_ante}
Featherstone, C., Niederle, M., 2008. Ex Ante Efficiency in School Choice Mechanisms: An Experimental
Investigation. National Bureau of Economic Research Working Paper No. 14618.

\bibitem[\protect\citeauthoryear{Gale and Shapley}{1962}]{GS}
Gale, D., Shapley, L., 1962. College admissions and the stability of marriage. American Mathemathical
Monthly. 69, 9 - 15.

\bibitem[\protect\citeauthoryear{Iwama and Miyazaki}{2008}]{Survey_Marry_prob}
Iwama, K., Miyazaki, S., 2008. A Survey of the Stable Marriage Problem and Its Variants.
In: Proceedings of the International Conference on Informatics Education and Research for Knowledge-Circulating Society 2008. 131 - 136.

\bibitem[\protect\citeauthoryear{JUPAS}{2010}]{JUPAS}
Joint University Programmes Admissions System (JUPAS), \url{http://www.jupas.edu.hk}. Last accessed on May. 1, 2011.

\bibitem[\protect\citeauthoryear{Kojima and Pathak}{2008}]{Incentive}
Kojima, F., Pathak, A., 2008. Incentives and Stability in Large Two-Sided Matching Markets. American Economic Review. 99(3), 608 - 627.


\bibitem[\protect\citeauthoryear{Miralles}{2008}]{SCcase}
Miralles, A., 2008. School Choice: The Case for the Boston Mechanism. Unpublished results.


\bibitem[\protect\citeauthoryear{Roth}{1982}]{Roth82}
Roth, A., 1982. The Economics of Matching: Stability and Incentives. Mathematics of Operations Research.
7(4), 617 - 628.

\end{thebibliography}

\bibliographystyle{spr-chicago}

\end{document}